\documentclass[11pt,a4paper]{article}

%%%%%%%%%%%%%%%%%%%%%%%%%%%%%%%%%%%%%%%%%%%%%%%%%%%%%%%%%%%%%%%%%%%%%%%%%%%%%%%%%%%%%%%%%%%%%%%%%%%%%%%%%%%%%%%%%%%%%%%%%%%%%%%%%%%%%%%%%%%%%%%%%%%%%%%%%%%%%%%%%%%%%%%%%%%%%%%%%%%%%%%%%%%%%%%%%%%%%%%%%%%%%%%%%%%%%%%%%%%%%%%%%%%PACKAGES

\usepackage{geometry}
\usepackage{jheppub}
\usepackage[utf8]{inputenc}
\usepackage{booktabs} % for much better looking tables
\usepackage{array} % for better arrays (eg matrices) in maths
\usepackage{paralist} % very flexible & customisable lists (eg. enumerate/itemize, etc.)
\usepackage{verbatim} % adds environment for commenting out blocks of text & for better verbatim
\usepackage{mathtools}
\usepackage{amsmath,amssymb,amsthm}
\usepackage{mathrsfs}
\usepackage{microtype}
\usepackage{cleveref}
\usepackage{slashed}
\usepackage{bbold} % blackboard 1
\usepackage[justification=centering]{caption}
\usepackage{subcaption}
\usepackage{tikz}

%%%%%%%%%%%%%%%%%%%%%%%%%%%%%%%%%%%%%%%%%%%%%%%%%%%%%%%%%%%%%%%%%%%%%%%%%%%%%%%%%%%%%%%%%%%%%%%%%%%%%%%%%%%%%%%%%%%%%%%%%%%%%%%%%%%%%%%%%%%%%%%%%%%%%%%%%%%%%%%%%%%%%%%%%%%%%%%%%%%%%%%%%%%%%%%%%%%%%%%%%%%%%%%%%%%%%%%%%%%%%%%%%%%COMMANDS

\newcommand{\mc}[1]{\mathcal{#1}}
\newcommand{\ii}{\text{i}}

\newcommand{\dd}{\textrm{d}}
\newcommand{\db}{\bar{\partial}}

\newcommand{\ul}[1]{\underline{#1}}

\newcommand{\bsym}[1]{\boldsymbol{#1}}

%%%%%%%%%%%%%%%%%%%%%%%%%%%%%%%%%%%%%%%%%%%%%%%%%%%%%%%%%%%%%%%%%%%%%%%%%%%%%%%%%%%%%%%%%%%%%%%%%%%%%%%%%%%%%%%%%%%%%%%%%%%%%%%%%%%%%%%%%%%%%%%%%%%%%%%%%%%%%%%%%%%%%%%%%%%%%%%%%%%%%%%%%%%%%%%%%%%%%%%%%%%%%%%%%%%%%%%%%%%%%%%%%%%THEOREMENVIRONMENTS

\swapnumbers
\newtheorem{theorem}{Theorem}[subsection]
\newtheorem*{theorem*}{Theorem}

\newtheorem*{prop*}{Proposition}

\newtheorem*{conj*}{Conjecture}

\newtheorem*{prin*}{Principle}

\newtheorem{lemma}[theorem]{Lemma}
\theoremstyle{definition}

\newtheorem*{defn*}{Definition}
\newtheorem{ex}[theorem]{Example}
\newtheorem{remark}[theorem]{Remark}

\newtheorem*{warning*}{Warning}

%%%%%%%%%%%%%%%%%%%%%%%%%%%%%%%%%%%%%%%%%%%%%%%%%%%%%%%%%%%%%%%%%%%%%%%%%%%%%%%%%%%%%%%%%%%%%%%%%%%%%%%%%%%%%%%%%%%%%%%%%%%%%%%%%%%%%%%%%%%%%%%%%%%%%%%%%%%%%%%%%%%%%%%%%%%%%%%%%%%%%%%%%%%%%%%%%%%%%%%%%%%%%%%%%%%%%%%%%%%%%%%%%%%TOC

\setcounter{tocdepth}{2}

%%%%%%%%%%%%%%%%%%%%%%%%%%%%%%%%%%%%%%%%%%%%%%%%%%%%%%%%%%%%%%%%%%%%%%%%%%%%%%%%%%%%%%%%%%%%%%%%%%%%%%%%%%%%%%%%%%%%%%%%%%%%%%%%%%%%%%%%%%%%%%%%%%%%%%%%%%%%%%%%%%%%%%%%%%%%%%%%%%%%%%%%%%%%%%%%%%%%%%%%%%%%%%%%%%%%%%%%%%%%%%%%%%%BIB

%%%%%%%%%%%%%%%%%%%%%%%%%%%%%%%%%%%%%%%%%%%%%%%%%%%%%%%%%%%%%%%%%%%%%%%%%%%%%%%%%%%%%%%%%%%%%%%%%%%%%%%%%%%%%%%%%%%%%%%%%%%%%%%%%%%%%%%%%%%%%%%%%%%%%%%%%%%%%%%%%%%%%%%%%%%%%%%%%%%%%%%%%%%%%%%%%%%%%%%%%%%%%%%%%%%%%%%%%%%%%%%%%%%STYLING

\sloppy

%%%%%%%%%%%%%%%%%%%%%%%%%%%%%%%%%%%%%%%%%%%%%%%%%%%%%%%%%%%%%%%%%%%%%%%%%%%%%%%%%%%%%%%%%%%%%%%%%%%%%%%%%%%%%%%%%%%%%%%%%%%%%%%%%%%%%%%%%%%%%%%%%%%%%%%%%%%%%%%%%%%%%%%%%%%%%%%%%%%%%%%%%%%%%%%%%%%%%%%%%%%%%%%%%%%%%%%%%%%%%%%%%%%TITLE AND ABSTRACT

\title{Vortex counting and the quantum Hall effect}
\author{Edward Walton}
\affiliation{{Department of Applied Mathematics and Theoretical Physics, University of Cambridge, Wilberforce Road, Cambridge, CB3 0DZ, UK}}
\date{}
\emailAdd{e.walton@damtp.cam.ac.uk}
\abstract{
We provide evidence for conjectural dualities between nonrelativistic Chern--Simons-matter theories and theories of (fractional, nonAbelian) quantum Hall fluids in \(2+1\)~dimensions. At low temperatures, the dynamics of nonrelativistic Chern--Simons-matter theories can be described in terms of a nonrelativistic quantum mechanics of vortices. At critical coupling, this may be solved by geometric quantisation of the vortex moduli space. Using localisation techniques, we compute the Euler characteristic \({\chi}(\mc{L}^\lambda)\) of an arbitrary power \(\lambda\) of a quantum line bundle \(\mc{L}\) on the moduli space of vortices in \(U(N_c)\) gauge theory with \(N_f\) fundamental scalar flavours on an arbitrary closed Riemann surface. We conjecture that this is equal to the dimension of the Hilbert space of vortex states when the area of the metric on the spatial surface is sufficiently large. We find that the vortices in theories with \(N_c = N_f = \lambda\) behave as fermions in the lowest nonAbelian Landau level, with strikingly simple quantum degeneracy. More generally, we find evidence that the quantum vortices may be regarded as composite objects, made of dual anyons. We comment on potential links between the dualities and three-dimensional mirror symmetry. We also compute the expected degeneracy of local Abelian vortices on the \(\Omega\)-deformed sphere, finding it to be a \(q\)-analog of the undeformed case.
}

%%%%%%%%%%%%%%%%%%%%%%%%%%%%%%%%%%%%%%%%%%%%%%%%%%%%%%%%%%%%%%%%%%%%%%%%%%%%%%%%%%%%%%%%%%%%%%%%%%%%%%%%%%%%%%%%%%%%%%%%%%%%%%%%%%%%%%%%%%%%%%%%%%%%%%%%%%%%%%%%%%%%%%%%%%%%%%%%%%%%%%%%%%%%%%%%%%%%%%%%%%%%%%%%%%%%%%%%%%%%%%%%%%%DOCUMENT

\begin{document}

\maketitle

%%%%%%%%%%%%%%%%%%%%%%%%%%%%%%%%%%%%%%%%%%%%%%%%%%%%%%%%%%%%%%%%%%%%%%%%%%%%%%%%%%%%%%%%%%%%%%%%%%%%%%%%%%%%%%%%%
%%%%%%%%%%%%%%%%%%%%%%%%%%%%%%%%%%%%%%%%%%%%%%%%%%%%%%%%%%%%%%%%%%%%%%%%%%%%%%%%%%%%%%%%%%%%%%%%%%%%%%%%%%%%%%%%%
\section{Introduction and summary}

In this paper we investigate the low-temperature behaviour of nonrelativistic gauge theory in \(2+1\) dimensions.
%duality
We focus particularly on \(U(N_c)\) Chern--Simons gauge theories with \(N_f\) fundamental scalar fields compactified on Riemann surfaces. In the Higgs phase, these theories are dynamically rich, supporting vortex solutions. Indeed, the particular theories we study have an effective low-temperature description in terms of Hamiltonian dynamics on moduli spaces of (nonAbelian) vortices. 

\emph{A priori}, studying the dynamics of these theories, even at low temperature, looks like a difficult task. Vortices are complicated objects, solving systems of coupled, nonlinear differential equations to which analytic solutions are known only in extremely special cases. As soon as \(N_c > 1\), the geometry of the moduli space is - in the best cases - only partially understood, making it hard to study quantum mechanics on it directly. With all that said, there are two remarkable strands of knowledge which allow us to get to grips with these theories, as follows.
\begin{itemize}
	\item \emph{Coulomb branch localisation}. Also known as Jeffrey--Kirwan--Witten localisation and originally due to Witten \cite{witten2DGT} and Jeffrey--Kirwan \cite{jeffreyLNA}, this technique allows one to express integrals over symplectic quotient spaces in terms of certain residues. It has led to enormous progress in the analysis of supersymmetric field theories (see \cite{pestunLT}, for instance). The vortex moduli space is a symplectic quotient (of an infinite-dimensional space by an infinite-dimensional group), so this is available to us (at least formally). This technique has been used in \cite{miyakeVMS,ohtaHCB} to compute the volumes of vortex moduli spaces.
	\item \emph{Duality}. It is expected that vortices, these magnetically charged particles in the background of an electrically charged condensate, admit a dual description as electrically charged particles in the background of a magnetic field.  This idea has found itself to be particularly important in understanding the physics of the quantum Hall effect: it is expected that fluids of vortices in bosonic Chern--Simons-matter theories with gauge group \(U(N_c)\) at Chern--Simons level \(\lambda\) describe (nonAbelian) quantum Hall fluids at filling fraction \(N_c/\lambda\) \cite{sternQHF,tongQHF,radicevicNB,doreyVM,kimuraVD}.
\end{itemize}
We will weave these strands together, using the first to provide quantitative evidence for the second and using the second to guide our interpretation of the results of the first.

Our main result is the computation of the Hilbert polynomials \(\chi(\mc{L}^\lambda)\) of quantum line bundles \(\mc{L} \to \mc{M}\) on moduli spaces of nonAbelian vortices for arbitrary \(N_c\) and \(N_f\) on an arbitrary compact Riemannian surface \(\Sigma\). This corresponds to computing the index (that is, the `expected' dimension of the Hilbert space) of the effective topological quantum mechanics of the \(2+1\) dimensional theories on \(\Sigma \times S^1\) or \(\Sigma \times \mathbb{R}\) at arbitrary Chern--Simons level, which is the power \(\lambda\). We expect, but do not prove, that this index exactly computes the Hilbert space dimension when the area of \(\Sigma\) is sufficiently large. 

In the Abelian case, where the moduli spaces are relatively easy to describe, we compute the index directly, using the Hirzebruch--Riemann--Roch theorem to express \(\chi(\mc{L}^\lambda)\) as an integral over the moduli space and doing this integral directly (this is sometimes called `Higgs branch localisation'). In the nonAbelian case, where the moduli spaces are not sufficiently well-understood to do this, we use Coulomb branch localisation techniques to carry out the computation. A key step in the calculation is the identification of the correct residue prescription when negative powers of Vandermonde-type determinants appear (which is whenever the genus of \(\Sigma\) is one or higher). The form that the result takes is generally complicated, but in special cases it simplifies dramatically.

Our approach runs parallel to matrix model approaches to understanding quantum mechanics on the vortex moduli space \cite{polyQH,hellermanQH,hananyVIB,tongQHSS, doreyVM}. The advantage of our approach is that it is rather general (in particular, we can work on surfaces of arbitrary genus) and, in a sense, geometrically clearer. On the other hand, we are restricted to the computation of rather `soft' observables (like the Euler characteristic) and may have to work hard for concrete insight.

The simplest, and perhaps most striking, upshot of our study is strong evidence for the notion that nonrelativistic vortices in \(U(N)\) gauge theories with \(N\) fundamental Higgs fields (so-called \emph{local} vortices) at level \(\lambda = N\) `are' fermions in a background flux quantum mechanically, at least at low temperature. This can be expressed as a low-temperature duality
\begin{equation}
\label{eq:duality1}
Z_{\text{SCS},N} \,\, \leftrightarrow \,\, Z_{\text{Fermi},N}
\end{equation}
where \(Z_{\text{SCS},N}\) is a critically-coupled nonrelativistic Chern--Simons-matter theory (a Schr\"odinger--Chern--Simons theory, hence SCS) with gauge group \(U(N)\), Chern--Simons level \(\lambda = N\) and \(N\) fundamental flavours, and \(Z_{\text{Fermi},N}\) is a theory of \(N\) fermion flavours critically coupled to a particular background gauge field for the \(U(N)\) flavour symmetry. Vortices in the theory on the left are mapped to elementary fermionic excitations on the right. The matching of various objects across the duality is summarised in \autoref{tab:duality}.

%table
\begin{table}[h!]

\begin{center}
\begin{tabular}{c c}
 \(Z_\text{SCS}\) & \(Z_\text{Fermi}\)
\rule[-1.5ex]{0pt}{0pt} \\
\hline  \rule{0pt}{1.0\normalbaselineskip}
\text{vortices} & \text{fermionic particles} \\
\text{vortex charges} & \text{fermion flavours} \\
\text{monopole operators} & \text{particle creation operators} \\
\text{vortex-hole symmetry} & \text{particle-hole symmetry}\\
\\
\text{\((U(1)_\text{top} \times SU(N)_\text{flavour})/\mathbb{Z}_N\) global symmetry} & \text{\(U(N)\) global symmetry}\\
\\
\text{Fayet--Iliopoulos parameter} & \text{background \(U(N)\) gauge field} \\
\text{Bohr--Sommerfeld quantisation condition} & \text{flux quantisation condition} \\
\\
\text{saturated Bradlow bound} & \text{saturated Landau level}\\
\\
\end{tabular}
\caption{A summary of the local Fermi-vortex duality.}
\label{tab:duality}
\end{center}
\end{table}

The duality \eqref{eq:duality1} was more-or-less proved in the Abelian case of \(N = 1\) in \cite{erikssonKQ}. In particular, it was shown in \cite{erikssonKQ} that the low-temperature topological quantum mechanics on both sides of the duality were equivalent: an isomorphism of Hilbert spaces was given. At the level of indices, the same result was found in \cite{deyGQ}.

Both \cite{erikssonKQ} and \cite{deyGQ} use knowledge of the geometry of the moduli space of Abelian vortices in their calculations. As we resort to the use of rather `soft' techniques, we demonstrate a weaker statement for general \(N\), finding an equality of the `expected' dimension of the Hilbert space (that is, the relevant index) on both sides of the duality\footnote{We expect that the index exactly computes the dimension of the Hilbert space when the area of \(\Sigma\) is sufficiently large. This expectation is based on the general lore of positivity for vector bundles and is proved in the Abelian case in \cite{erikssonKQ}.}. In particular, on both sides of the duality we find the index (at fixed particle number) 
\begin{equation}
\label{eq:duality1-1}	
{N \mc{A} \choose k}
\end{equation}
where \(N \mc{A} = N \frac{e^2 \tau \text{vol}(\Sigma)}{4\pi} \) is the area of the two-dimensional surface \(\Sigma\) in units of vortex size (it is related linearly to the Fayet--Iliopoulos parameter \(\tau\) of the vortex theory, and to the gauge coupling constant \(e^2\)) and \(k\) is the number of particles or vortices. We see that the inverse of the vortex size, which is roughly the charge density of the bosonic condensate, sets the strength of the effective magnetic field in the Fermi theory. The parameter \(N\mc{A}\) must be an integer: in the vortex theory this is necessary for the quantum theory to exist, in the Fermi theory this is a flux quantisation condition. 

On the Fermi side, the computation of the index is a simple calculation: these are fermions in the lowest (nonAbelian) Landau level at filling fraction \(\nu = k/N\mc{A}\). On the vortex side it is tricky: even after one has carried out the localisation calculation, it involves apparently miraculous cancellations in a polynomial in \(N\mc{A}\) of generic degree \(Nk\) (there is also a simple, but highly heuristic, argument for the result just from symmetry arguments, which we give in \autoref{subsec:heuristic}). 

The result \eqref{eq:duality1-1} does not depend on the topology or local geometry of the surface \(\Sigma\), which may be thought of as reflecting the local nature of vortices when \(N_f = N_c\). The dependence of the result on the choice of two-dimensional scale \(\mc{A}\) is simple. Note also that \eqref{eq:duality1-1} captures the so-called Bradlow bound \cite{bradlowVLB}, which is the selection rule  \(0 \leq k \leq N\mc{A}\) for vortices on compact surfaces.

%this duality is non supersymmetric
The fact that our evidence takes the form of an equality of indices may lead one to suspect that the duality only holds as a result of some hidden supersymmetry. However, the evidence of \cite{erikssonKQ} in the Abelian case suggests that this is not the case. In that case there is a genuine isomorphism of non-supersymmetric Hilbert spaces. Our methods are not powerful enough to show this in the nonAbelian case, but it seems reasonable to conjecture that it is true; this conjecture then leads to conjectures on the geometry of the moduli space of nonAbelian vortices.

Our general result, given as \eqref{eq:generalvortexcount}, contains much more than \eqref{eq:duality1-1}. We can independently vary \(\lambda\), which is conjectured to change the filling fraction of the dual quantum Hall fluid, \(N_f\), and  \(N_c\). In general, these changes lead to significant complications in the form of the index, involving genus-dependent contributions. This reflects just how special local vortices at the right level are. 

So as to not bombard the reader with unnecessarily complicated formulae at this early stage, we will illustrate the ideas in the Abelian case. Setting \(N_c = 1\) but allowing for a general number \(N_f\) of charged scalars and general level \(\lambda\), we find the index
\begin{equation}
\label{eq:duality1-2}
\sum_{j=0}^g {g \choose j} \lambda^j N_f^{g-j} {\lambda( \mc{A} - k) + N_f k + (N_f-1)(1-g) - g \choose N_f k + (N_f -1)(1-g) - j }
\end{equation}
at vortex number \(k\), where \(g\) is the genus of \(\Sigma\). We derive this result rigorously: it is given as \autoref{theorem:semilocalab}.
%FQHE?

We extract some lessons from this, as follows.
\begin{itemize}
	\item \emph{Composite vortices and fractional statistics}. The way that the vortex number \(k\) enters \eqref{eq:duality1-2} indicates that vortices can be given a `composite' interpretation. If we squint and ignore topology-dependent effects for a moment, comparison with counting formulae for particles with fractional statistics (see \cite{wuSD}, for example) suggests that we should think of a single vortex in this Abelian theory as a bound state of particles with exchange phase \(\exp(\ii \pi N_f/\lambda)\) (and, dually, vortex holes are bound states of particles with exchange phase \(\exp(\ii \pi \lambda/N_f)\)). There are some subtle topological issues in play, but: \emph{Quantum vortices are made of anyons!} This idea is not new: in \cite{tongQHSS} it was pointed out that (local) Abelian vortices can admit quasihole-like excitations with fractional spin and charge, which were evaluated using a Berry phase argument.
	 
The composite vortices can be thought of as experiencing the `effective magnetic flux'
\[
\lambda\mc{A} - (\lambda - N_f)k + \cdots = N_f \mc{A} - (N_f - \lambda)(\mc{A} - k) + \cdots 
\]
where \(\cdots\) denotes topological contributions. This is highly reminiscent of the effective magnetic flux experienced by the composite fermions of Jain \cite{jainCFA,jainCF}, with \(\mc{A}\) identified with the total applied magnetic flux (as in the duality above). Indeed, modulo topology, at \(N_f = 1\) and \(\lambda\) odd, vortices should be dual to composite fermions, electrons bound to \(\lambda-1\) flux quanta.   

We note further that the topological contributions that we have suppressed above are related in the dual theory to the geometrical phases of  \cite{wenSS} that arise when coupling the quantum Hall fluid to the curvature of space (see also \cite{haldaneGD} for an approach to understanding quantum Hall fluids where similar phases arise, albeit for slightly different reasons).

	\item \emph{Vortex-hole (a)symmetry and bosonisation}. An interesting feature of the local result \eqref{eq:duality1-1}	is that it demonstrates an exact symmetry between the theory of the vortices in the Higgs phase, which have number \(k\), and the theory of vortex holes in a vortex fluid (which is a Coulomb phase), which have number \(N_c\mc{A} - k\), via the symmetry \({N\mc{A} \choose k} = {N\mc{A} \choose N\mc{A} - k}\). 
	
	The index \eqref{eq:duality1-2} illustrates the fact that when \(N_c \neq N_f\) this symmetry is generally broken by topology-dependent effects. In the general case, this breaking is characterised by the quantity\footnote{This quantity is, not by coincidence, reminiscent of that characterising the ghost anomaly in the theory of two-dimensional sigma models into \(\text{Gr}(N_c,N_f)\) with \(\mc{N}= (2,2)\)  supersymmetry.} \(N_c(N_f-N_c)(1-g)\). The symmetry is partially restored at \(g=1\): in this case, \eqref{eq:duality1-2} is unchanged if one interchanges
	\begin{equation}
	\label{eq:vhdual}
	k \leftrightarrow \mc{A}-k \text{ and }\lambda \leftrightarrow N_f \text{.}
	\end{equation}
	
That one must interchange \(\lambda\) with \(N_f\) is perhaps surprising and a little mysterious. A potential explanation, which we do not develop in any detail here, comes from the consideration of more general  three-dimensional theories. The Chern--Simons level may be viewed as being induced by integrating out massive fermionic fields in an auxiliary ultraviolet theory. Then the genus one vortex-hole duality of \eqref{eq:vhdual} might be viewed as a shadow of (a nonrelativistic version of) the `master' \emph{bosonisation} duality of \cite{jensenMB} (which builds, in particular, on \cite{aharonyBMD}; see also \cite{radicevicNB} for comments on bosonisation in nonrelativistic theories). This duality interchanges fermions and bosons, so by applying the duality and then integrating out the fermions on both sides, one recovers the right kind of picture. 
	
	\item \emph{Topological degeneracy}. Consider the vortex fluid at maximal density, corresponding to \(k = \mc{A}\). Then the index \eqref{eq:duality1-2} tells us that the vortex fluid has expected quantum degeneracy \(\lambda^g\). 
	
	This is not surprising. When \(k = \mc{A}\), the vortex centres spread over the whole surface, the Higgs fields vanish everywhere, and the theory is in a Coulomb phase, with a massless photon. The theory is then simply a \(U(1)\) Chern--Simons theory at level \(\lambda\). This theory has classical degeneracy characterised by the moduli space of flat \(U(1)\) connections. It is well-known, and not too hard to show, that the corresponding quantum theory has degeneracy \(\lambda^g\).
	
\end{itemize}

The above ideas extend to the nonAbelian case, but now there are some wrinkles. The outcome of the computation becomes significantly more complicated away from the special cases of \(N_c = N_f = \lambda\) or \(N_c =1\). We expect that the general result leads to understanding of nonAbelian fractional quantum Hall fluids. 

\begin{comment}

An interesting use of our general result is the analysis of Seiberg-like dualities for these theories. In the limit of strong gauge coupling, vortices in these theories become holomorphic curves in the complex Grassmannian \(\text{Gr}(N_c, N_f)\), which is the Higgs branch of vacua in the theory. The symmetry \(\text{Gr}(N_c,N_f) \cong \text{Gr}(N_f-N_c,N_f)\) implies that there must be a kind of duality for these theories under \(N_c \leftrightarrow N_f - N_c\) at strong coupling. In general, the vortex moduli space resolves the (generally singular) moduli space of holomorphic maps into the Higgs branch and the resolutions on each side of the duality may differ. In forthcoming work \cite{waltonSD}, the results of this note are used to probe the Seiberg duality of these theories.
 
\end{comment}

%general lambda, N_f : still have topological degeneracy lambda^g

%duality, flavour symmetry?

%general semilocal?

The rest of this note is structured as follows. In \autoref{sec:csm} we outline the mathematical background behind nonrelativistic Chern--Simons-matter theories and describe how they can be effectively described in terms of quantum mechanics on the vortex moduli space. We also describe the process of geometric quantisation and some of the subtleties that arise. In \cref{sec:avm,sec:navm,sec:simp} we carry out the computation of the Hilbert polynomial of the quantum line bundle on the vortex moduli space. This computation has three parts:
\begin{itemize}
	\item Rigorously quantise the moduli space of Abelian vortices. We do this by direct integration over the moduli space, which we describe as a projective bundle over the Jacobian variety of \(\Sigma\) for sufficiently large \(k\). The result is given in \autoref{theorem:semilocalab}. When \(N_c = N_f =1\) and \(\Sigma \cong S^2\), we go further, quantising the theory in the presence of a `harmonic trap'.  
	
	\item Use Coulomb branch localisation to reduce the computation in the nonAbelian case to an integral over a Cartan subalgebra of \(\mathfrak{u}(N_c)\) and, correspondingly, to a sum of `Abelian' contributions, which we now understand. The result of this is the general formula \eqref{eq:generalvortexcount} for the Hilbert polynomials of quantum line bundles on moduli spaces of vortices in the theories we study. 
	
	\item Simplify the output in special cases. This is `elementary' but tricky, requiring us to prove some combinatorial identities. Our main weapon in these proofs is simply the comparison of generating functions. The result in the case \(N_c = N_f = \lambda\) is given in \eqref{eq:localcount}.
\end{itemize}
We go on to interpret the results in the context of duality in \autoref{sec:localduality} and \autoref{sec:slduality}.

\section{Chern--Simons-matter theories and vortices}
\label{sec:csm}

%%%%%%%%%%%%%%%%%%%%%%%%%%%%%%%%%%%%%%%%%%%%%%%%%%%%%%%%%%%%%%%%%%%%%%%%%%%%%%%%%%%%%%%%%%%%%%%%%%%%%%%%%%%%%%%%%
\subsection{K\"ahler vortices}
\label{subsec:kahlervortices}

Let \((Y, \omega_Y)\) be a K\"ahler manifold carrying the isometric, Hamiltonian action of a compact Lie group \(G\). Write 
\[
\mu : Y \to \mathfrak{g}^\vee
\]
for the K\"ahler moment map. Here \(\mathfrak{g}\) is the Lie algebra associated to \(G\). We will identify \(\mathfrak{g}\) with its dual using the Killing form. 

\begin{ex}
The example to which we specialise shortly is the case where \(Y\) consists of \(N_f\) copies of the fundamental representation of \(G = U(N_c)\). The moment map is then
\[
\mu((z_1, \cdots, z_{N_f})) = \frac{\ii}{2} z_i z_i^\dagger 
\]	
where \(z_i \in \mathbb{C}^{N_c}\) for \(i = 1, \cdots, N_f\) and summation is implied.
\end{ex}

Let \(\Sigma\) be a Riemann surface, representing physical space, with a volume form \(\omega_\Sigma\). Let \(P \to \Sigma\) be a principal \(G\)-bundle. We can then form the associated \(Y\)-bundle
\[
p : \mc{Y} \coloneqq P \times_G Y \to \Sigma. 
\]
We consider a theory of a connection \(A\) on \(P\) and a section \(\phi\) of \(\mc{Y}\). The connection \(A\) on \(P\) induces a connection on \(\mc{Y}\), which is a splitting of the Atiyah exact sequence
\begin{equation}
\label{eq:atiyah}
T_V \mc{Y} \to T\mc{Y} \xrightarrow{\dd p} p^*T\Sigma \text{,}
\end{equation}
where \(T_V \mc{Y}\) is the bundle of vertical vector fields (defined as those in the kernel of \(\dd p\)). Indeed, the connection \(A\) induces a map 
\[
v_A : T\mc{Y} \to T_V\mc{Y}
\]
which splits the exact sequence \eqref{eq:atiyah}. The \emph{covariant derivative} of a section \(\phi\) of \(\mc{Y}\) with respect to the connection \(A\) is then defined by
\[
\dd_A \phi =  v_A ( \dd \phi ) \in \Omega^1(\Sigma, T_V \mc{Y}) \text{.}
\]

The symplectic form \(\omega_Y\) on \(Y\) induces a vertical 2-form on the fibres of \(\mc{Y}\), but not a true 2-form. For that, we need to use the connection to tell us how to feed non-vertical vectors into the vertical 2-form (this is essentially minimal coupling to a gauge field). This gives a true 2-form, which we call \(\omega_\mc{Y}\).

The natural energy density functional on the space of pairs \((A, \phi)\) is the Yang--Mills--Higgs energy density
\[
\mc{E}_\text{YMH} = \frac{1}{e^2} |F(A)|^2 + |\dd_A\phi|^2 + e^2(\xi - \mu(\phi))^2 \text{,}
\]
where \(\xi\) is an element of the (dual of the) centre of \(\mathfrak{g}\) (known in the context of supersymmetric field theory as the \emph{Fayet-Iliopoulos parameter}), and \(e^2\) is the gauge coupling constant, a scale factor on the dual Killing form on \(\mathfrak{g}^\vee\) (we call the coupling constant \(e^2\) even in the nonAbelian case rather than the usual \(g^2\) to avoid confusion with the genus \(g\) of \(\Sigma\)). As shown in \cite{cieliebakSV}, this functional admits the following \emph{Bogomolny rearrangement}
\begin{equation}
\label{eq:vortexbog}
\mc{E}_\text{YMH} = \frac{1}{e^2} | *F(A)- e^2\left(\xi - \mu(\phi) \right) |^2 + |\db_A \phi|^2 + * (\phi, A)^* [\omega_Y^G](\xi) \text{,}
\end{equation}
where the Dolbeault operator \(\db_A\) is defined by
\[
\db_A \phi = \frac{1}{2} \left( \dd_A \phi + J_Y \circ \dd_A\phi \circ j_\Sigma\right) \text{,}
\]
and where \(\omega^G_Y\) is the \(G\)-equivariant symplectic form on \(Y\) (see \cite{atiyahMM}).

The \emph{vortex equations} ask that
\begin{align}
F(A) &= e^2 \left(\xi - \mu(\phi) \right) \omega_\Sigma \\
\db_A \phi &= 0 \text{.}
\end{align}
Solutions to these equations are absolute minimisers of \(\mc{E}_\text{YMH}\) within their topological class.

%%%%%%%%%%%%%%%%%%%%%%%%%%%%%%%%%%%%%%%%%%%%%%%%%%%%%%%%%%%%%%%%%%%%%%%%%%%%%%%%%%%%%%%%%%%%%%%%%%%%%%%%%%%%%%%%%
\subsection{Vortex moduli}

Write \(\mc{C}\) for the infinite-dimensional space of (reasonable) pairs \((A, \phi)\). This is an infinite-dimensional K\"ahler manifold, with K\"ahler form\footnote{We have inserted an extra factor of \(e^2\) into the definition of the K\"ahler form so as to simplify notation later, in line with our convention that Chern--Simons levels are pure integers.}
\[
\omega_\mc{C} ( (\dot{A}_1, \dot{\phi}_1), (\dot{A}_2, \dot{\phi}_2) ) = e^2 \int_\Sigma \left(\frac{1}{e^2} \text{tr}(\dot{A}_1 \wedge \dot{A}_2) + \omega_\mc{Y} ( \dot{\phi}_1, \dot{\phi}_2)  \omega_\Sigma  \right)
\]
where \(( \dot{A}_i, \dot{\phi}_i) \in \Omega^1(\Sigma, \text{ad}_P) \times\Gamma(\phi^*T_V\mc{Y})\), \(i=1,2\) are  tangent vectors to \(\mc{C}\) at a point \((A,\phi)\). The space \(\mc{C}\) has a natural complex structure and the corresponding Riemannian metric is the natural kinetic energy functional for the relativistic \((2+1)\)-dimensional gauge theory associated to this data. 

Let \(\mc{C}_0\) be the space of pairs \((A, \phi) \in \mc{C}\) solving the equation
\[
\db_A \phi = 0 \text{.}
\]
This is (formally) a symplectic space, with symplectic form \(\omega_{\mc{C}_0}\).
The group \(\mc{G} \coloneqq \Gamma(\text{Ad}_P)\) of gauge transformations acts on \(\mc{C}\) and on \(\mc{C}_0\) in the usual way, preserving the symplectic forms \(\omega_\mc{C}\) and \(\omega_{\mc{C}_0}\). One can show \cite{cieliebakSV} (see also \cite{atiyahYM}) that the moment map for this action is
\[
\nu(A,\phi) = * F(A)  + e^2 \mu(\phi)
\]
which takes values in the dual of the Lie algebra of \(\mc{G}\). This tells us that the vortex moduli space is the K\"ahler quotient
\[
\mc{M} = \mc{C}_0 //_\xi \mc{G} \text{.}
\]
We have been very sloppy: the functional analysis of this was worked out in the Abelian case in \cite{garciapradaDEP}, and it works out as advertised. One upshot of this result is that the vortex moduli space is a K\"ahler manifold (at least for generic \(\xi\)).

What does the moduli space look like? In general it is hard to say, but a great deal is known in special cases. In particular, if \(G = U(1)\) and \(Y\) is the fundamental representation of \(U(1)\), then the moduli space of charge \(k\) vortex solutions on \(\Sigma\) is well-known \cite{jaffeVM,noguchiYMH,bradlowVLB,garciapradaDEP} to be the symmetric product
\[
\text{Sym}^k(\Sigma) \coloneqq \Sigma^{\times k}/S_k
\]
where the symmetric group \(S_k\) acts by permuting the various factors in \(\Sigma^{\times k}\). This tells us that a charge \(k\) vortex solution in this case is uniquely specified by an unordered list of \(k\) points in \(\Sigma\), which we interpret as the positions of the \(k\) vortices.

There have been many more investigations into the moduli spaces of more general vortices from both a physical and mathematical perspective, including \cite{baptistaNAV,biswasMV,etoMS,etoSHP,etoCNA,etoOTM,hananyVIB}.

%%%%%%%%%%%%%%%%%%%%%%%%%%%%%%%%%%%%%%%%%%%%%%%%%%%%%%%%%%%%%%%%%%%%%%%%%%%%%%%%%%%%%%%%%%%%%%%%%%%%%%%%%%%%%%%%%
\subsection{Schr\"odinger--Chern--Simons theories}

The Chern--Simons matter theories which play a central role in our story are the `natural' nonrelativistic \((2+1)\)-dimensional extensions of the theory of two-dimensional Yang--Mills--Higgs theory we considered above. As we will show, these theories admit an (exact) description as gauged Hamiltonian mechanics on the infinite-dimensional configuration space \(\mc{C}\) and an effective low-temperature description in terms of trivial Hamiltonian mechanics on the vortex moduli space \(\mc{M}\). The particular theories we consider are generalisations of the Abelian theory considered by Manton in \cite{mantonFOV} (see also \cite{mantonCL,romaoQCS,romaoSSD,tongQHSS} for further considerations of Manton's model relevant to our study). We will give the general formulation to illustrate the geometry of the theories, but will later specialise to familiar linear examples, where the description simplifies. 

The theory is specified by the data of a Lie group \(G\), a Hamiltonian \(G\)-space \(Y\), as above, and a Chern--Simons level in \(H^4(BG)\), which we represent as \(\lambda\). The action of the theory on \(\Sigma \times S^1\) takes the form
\begin{equation}
\label{eq:lagcsm}
\begin{gathered}
S = \frac{\lambda}{2\pi} \int_{\Sigma \times D^2} \left( \text{tr}\left( F(\tilde{A}) \wedge F(\tilde{A}) \right) + e^2 \tilde{\phi}^*\omega_\mc{Y} \wedge \omega_\Sigma + e^2 (\mu(\tilde{\phi}) - \xi, F(\tilde{A}) )\wedge \omega_\Sigma \right) \\
 \quad \quad \quad\quad\quad\quad\quad - \frac{\beta}{2\pi} \int_{\Sigma \times S^1} \mc{E}_\text{YMH} \dd t \wedge \omega_\Sigma 
\end{gathered}
\end{equation}
There are a few things to explain here. First, \(\tilde{A}\) and \(\tilde{\phi}\) are extensions of \(A\) and \(\phi\) respectively from \(\Sigma \times S^1\) to \(\Sigma \times D^2\). In general, such extensions may not exist, so in defining the theory we should restrict to loops \((A|_\Sigma,\phi) : S^1 \to \mc{C}\) which are simply connected. Provided an extension exists, the Lagrangian depends only on the homotopy class of the extension provided the parameter \(\lambda\) is an integer (or more accurately, that the class in \(H^4(BG)\) induced by \(\lambda\) is integral).\footnote{There is a quirk of conventions here. A Chern--Simons level is sometimes thought of as a dimensionful quantity taking values in \(\hbar e^{-2} \mathbb{Z}\), rather than a dimensionless quantity taking values in \(\mathbb{Z}\). We will prefer to take our Chern--Simons levels to be pure integers. We will usually work in units with \(\hbar = 1\), but we have to be careful with factors of \(e^2\).}

The coordinate \(t \in [0, 2\pi)\) is a periodic coordinate on the circle factor. The constant \(\beta\) is a scale factor on the length of the circle. Notice that it appears only in front of the second integral, as this is the only place where a `bare' \(\dd t\) sits. 

For reasons that we'll describe shortly, one can (up to a ground state energy) replace the critically-coupled two-dimensional energy density \(\mc{E}_\text{YMH}\), which is the Hamiltonian of the theory of \eqref{eq:lagcsm}, with \(|\db_A \phi|^2\), which looks a little less `fine-tuned'.

\begin{comment}
A mathematically cleaner (but not necessarily more intelligible) way to write \eqref{eq:lagcsm} follows by realising that
\[
\omega_\mc{Y}^G = \omega_\mc{Y} - \mu
\]
is an equivariant symplectic form on \(\mc{Y}\). Then \eqref{eq:lagcsm} is
\begin{equation*}
\begin{gathered}
S = \frac{\lambda}{2\pi} \int_{\Sigma \times D^2} \left( \text{tr}\left( F(\tilde{A}) \wedge F(\tilde{A}) \right) + e^2 (\tilde{\phi},\tilde{A})^*\omega_\mc{Y}^G \wedge \omega_\Sigma - e^2 ( \xi, F(\tilde{A}) )\wedge \omega_\Sigma \right) \\
 \quad \quad \quad\quad\quad\quad\quad - \frac{\beta}{2\pi} \int_{\Sigma \times S^1} \mc{E}_\text{YMH} \dd t \wedge \omega_\Sigma \text{.} 
\end{gathered}
\end{equation*}
One could also absorb the FI parameter \(\xi\) into \(\omega_\mc{Y}^G\) if one wished to. As we'll see, the \(F\wedge F\) term also admits an interpretation of this kind.
\end{comment}

To connect \eqref{eq:lagcsm} with more familiar things, we can write it purely in \((2+1)\)-dimensional notation. To do this, it is useful to specialise to the case of \(G = U(N_c)\) and \(Y = \mathbb{C}^{N_cN_f}\). Then
\begin{align*}
\tilde{\phi}^* \omega_\mc{Y} - (\mu(\tilde{\phi}),F(A)) &= \ii \, \dd_{\tilde{A}} \tilde{\phi}_i^\dagger \wedge \dd_{\tilde{A}} {\tilde{\phi}}_i - \ii \phi^\dagger_i F(A) \phi_i \\
	&= \frac{\ii}{2} \dd \left( \tilde{\phi}^\dagger_i \dd_{\tilde{A}} \tilde{\phi}_i - ( \dd_{\tilde{A}} \tilde{\phi} ^\dagger_i) \tilde{\phi}_i \right)
\end{align*}
where \(i = 1, \cdots, N_f\) is the flavour index and summation is implied. This allows us to use Stokes' theorem, leaving us with an integral over the boundary of \(\Sigma \times D^2\), which is \(\Sigma \times S^1\). This is
\begin{equation}
\label{eq:lagcsm2}
\begin{gathered}
S = \frac{\lambda}{2\pi} \int_{\Sigma \times S^1} \left( *\text{CS}(A) - e^2 (\xi, A_t)  + \frac{\ii e^2}{2} \left( \phi_i^\dagger D_t \phi_i - (D_t \phi_i)^\dagger \phi_i \right) + \frac{\beta}{\lambda} \mc{E}_\text{YMH} \right) \dd t \wedge \omega_\Sigma
\end{gathered}
\end{equation}
where \(D_t\) is the time component of the covariant derivative and \(\text{CS}(A)\) is the Chern--Simons 3-form,
\[
\text{CS}(A) \coloneqq \text{tr}\left( A \wedge \dd A + \frac{1}{3} A \wedge [A,A]\right)\text{,}
\]
which has the defining property that it is a primitive of \(\text{tr}(F(\tilde{A}) \wedge F(\tilde{A}))\).

In the form \eqref{eq:lagcsm2}, the action looks quite familiar. It is the obvious generalisation of the model considered by Manton in \cite{mantonFOV} in the Abelian case, at least for certain choices of coupling. The nonAbelian version of this theory has previously been considered in \cite{radicevicNB,turnerBPS}, for example. The reason that we initially wrote \eqref{eq:lagcsm} rather than \eqref{eq:lagcsm2} is that the latter is not generally well-defined (or rather, it is well-defined only when one defines it as \eqref{eq:lagcsm}). 

%%%%%%%%%%%%%%%%%%%%%%%%%%%%%%%%%%%%%%%%%%%%%%%%%%%%%%%%%%%%%%%%%%%%%%%%%%%%%%%%%%%%%%%%%%%%%%%%%%%%%%%%%%%%%%%%%
\subsection{Classical vortex mechanics}

The action \eqref{eq:lagcsm} can be reinterpreted as an action for gauged Hamiltonian dynamics on the infinite-dimensional configuration space \(\mc{C}\). The gauge group is the group \(\mc{G} = \Gamma(\Sigma, \text{Ad}_P)\) of \(G\)-valued gauge transformations on \(\Sigma\).

In \((0+1)\)-dimensional language, the field content of the theory consists of a path in the space of configurations on \(\Sigma\), which we write as 
\[
(A_\Sigma, \phi) : S^1 \to \mc{C}
\]
where \(A_\Sigma\) is the two-dimensional gauge field and \(\phi\) is the scalar field, as before, and a gauge potential \(a\) for the infinite-dimensional gauge group \(\mc{G}\). The gauge potential \(a\) is the \(t\)-component of the original gauge potential \(A\) of the \((2+1)\)-dimensional theory. Indeed, \(A = a + A_\Sigma\).

More precisely, we take a \(\mc{G}\)-bundle \(\mc{P} \to S^1\) with connection \(a\).
There is an associated \(\mc{C}\)-bundle given by
\[
\ul{\mc{C}} \coloneqq \mc{P} \times_\mc{G} \mc{C} \to S^1 \text{,}
\]
and we take a section \((A_\Sigma, \phi)\) of this bundle. The space \(\mc{C}\) has a symplectic form \(\omega_\mc{C}\), which defines a vertical 2-form along the fibres of \(\ul{\mc{C}}\). By minimally coupling this to the gauge field \(a\), this can be extended to a true 2-form \(\omega_{\ul{\mc{C}}}\) on \(\ul{\mc{C}}\).  

We suppose that this data can be extended over the disk and we write \((\tilde{A}_\Sigma, \tilde{\phi})\) and \(\tilde{a}\) respectively for the corresponding extensions of \((A_\Sigma, \phi)\) and \(a\).

The action \eqref{eq:lagcsm} becomes
\begin{equation}
\label{eq:lagqm1}
S = \frac{\lambda}{2\pi} \int_{D^2} \left( (\tilde{A},\tilde{\phi})^* \omega_{\ul{\mc{C}}}  + (\nu(\tilde{A}, \tilde{\phi}) - e^2 \xi, F(\tilde{a}) )\right)- \beta \int_{S^1}  E_\text{YMH}(A, \phi) \dd t \text{,}
\end{equation}
where we recall that the moment map \(\nu\) is
\[
\nu(\tilde{A}, \tilde{\phi}) = *_\Sigma F(\tilde{A}) + e^2 \mu(\tilde{\phi}) \text{,}
\]
and have used bracket notation to denote the inner product on \(\text{Lie}(\mc{G}) = \Gamma(\text{ad}_P)\), which can be thought of as being induced by the inner product on the original Lie algebra \(\mathfrak{g}\) and integration over \(\Sigma\).

Applying Stokes' theorem to \eqref{eq:lagqm1} gives
\begin{equation}
S = \frac{\lambda}{2\pi} \int_{D^2} (\tilde{A}, \tilde{\phi})^*\omega_{\ul{\mc{C}}} + \frac{1}{2\pi} \int_{S^1}  \left( \lambda (\nu(A,\phi) - \xi, a_t)  - \beta E_\text{YMH} \right)\dd t \text{,}
\end{equation}
where \(a = a_t \dd t\).

The particular Lagrangian \eqref{eq:lagqm1} takes the standard form of gauged Hamiltonian mechanics (as in \cite{xuGHF}) in \(\mc{C}\). The gauge group is \(\mc{G}\) and the \(\mc{G}\)-invariant Hamiltonian is \(E_\text{YMH}\). The gauge field \(a\) of this mechanics enters only as a Lagrange multiplier, and integrating it out imposes the `Gauss law'
\[
* F(A) = e^2 ( \xi - \mu(\phi))\text{,}
\]
which we identify as the first vortex equation. With this done, we may quotient by the gauge action, after which the action \eqref{eq:lagqm1} becomes (in the temporal gauge, \(a = 0\))
\begin{equation}
\label{eq:lagqm2}
 \frac{\lambda}{2\pi} \int_{D^2} (\tilde{A},\tilde{\phi})^* \omega_\mc{C//\mc{G}} - \frac{\beta}{2\pi} \int_{S^1}  \left( \int_\Sigma \left(|\db_A \phi|^2 \omega_\Sigma \right)+ E_\text{top} \right) \dd t
\end{equation}
which is a Hamiltonian mechanics on the symplectic quotient \(\mc{C} // \mc{G}\), with Hamiltonian (up to the topological term, \(E_\text{top}\), which is just a ground state energy)
\[
\int_\Sigma |\db_A \phi |^2 \omega_\Sigma \text{,}
\]
which we view as a function of gauge equivalence classes of pairs \((A, \phi)\).
In the limit of low energy\footnote{In condensed matter applications, one could imagine dissapative effects driving the system towards low energy. In pure Hamiltonian mechanics, the energy is conserved.}, or low temperature (that is, rescaling \(t \mapsto \beta t\) and taking \(\beta \to \infty\)), we are led to consider the  configurations which minimise this energy, which is the locus of
\[
\db_A \phi = 0 \text{.}
\]
Imposing this equation puts on the vortex moduli space, with restricted action
\begin{equation}
\label{eq:lagvortex}
S_\text{vortex} = \frac{\lambda}{2\pi} \int_{D^2} \tilde{z}^* \omega_\mc{M} 
\end{equation}
where \(z\) is a simply-connected loop \(S^1 \to \mc{M}\), \(\tilde{z}\) is an extension of \(z\) to the disk, the 2-form \(\omega_\mc{M}\) is the K\"ahler form on the vortex moduli space, and we have subtracted off the dynamically unimportant topological term. The reason that this all works out is the fact that the vortex moduli space is a symplectic quotient.

The action \eqref{eq:lagvortex} describes Hamiltonian mechanics on the vortex moduli space with zero Hamiltonian. Notice that the vortex moduli space is the phase space of the theory, not the configuration space (as it would be in a theory with second-order dynamics). The theory is classically `trivial' (modulo the difficult problem of finding a vortex solution), with no dynamics. The vortices sit still, reflecting their BPS nature. This theory is interesting quantum mechanically, however: it is a (generally nontrivial) topological quantum mechanics on the vortex moduli space.

Note that we could have taken the low-temperature limit before integrating out the one-dimensional gauge field. In this case, we get the low-temperature gauged quantum mechanics
\[
S_{\beta \to \infty} = \frac{\lambda }{2\pi} \int_{D^2}  \left( (\tilde{A}, \tilde{\phi})^* \omega_{\ul{\mc{C}_0}} + (\nu(\tilde{A}, \tilde{\phi} ) - e^2 \xi, F(\tilde{a})) \right) \text{.}
\]
Integrating out \(a\) and taking the quotient by the group \(\mc{G}\) leaves one with the theory \eqref{eq:lagvortex}.

%%%%%%%%%%%%%%%%%%%%%%%%%%%%%%%%%%%%%%%%%%%%%%%%%%%%%%%%%%%%%%%%%%%%%%%%%%%%%%%%%%%%%%%%%%%%%%%%%%%%%%%%%%%%%%%%%
\subsection{Quantum vortex mechanics}

The most natural way to quantise the classical theory of \eqref{eq:lagvortex} is by \emph{geometric quantisation} (see \cite{woodhouseGQ}). This proceeds by finding a holomorphic line bundle
\[
\mc{L}^\lambda \to \mc{M}
\]
over the moduli space with first Chern class
\[
c_1(\mc{L}^\lambda) = \frac{\lambda }{2\pi} [\omega_\mc{M}] \in H^2(\mc{M}, \mathbb{Z}) \text{.}
\]
This imposes a Bohr--Sommerfeld quantisation condition on the K\"ahler form \( \omega_\mc{M}\).

The Hilbert space of states is then identified with the space of holomorphic sections of \(\mc{L}^\lambda\):
\[
\mc{H}_0(\lambda) \coloneqq H^0(\mc{L}^\lambda)\text{.}
\]
In general, even the dimension of this space is hard to access. It generally depends on the precise choice of line bundle \(\mc{L}\). A more accessible quantity is the dimension of the graded space
\[
\mc{H}(\lambda) \coloneqq \sum_i (-1)^i H^i(\mc{L}^\lambda) \text{,}
\]
The graded dimension of this is the \emph{Euler characteristic} \(\chi(\mc{L}^\lambda)\) of \(\mc{L}^\lambda\). If there is a well-defined bundle \(\mc{L}\) at \(\lambda = 1\), we may also call \(\chi(\mc{L}^\lambda)\) the \emph{Hilbert polynomial} of \(\mc{L}\) when viewed as a function of \(\lambda\).

The Euler characteristic is rendered accessible by virtue of the Hirzebruch--Riemann--Roch theorem
\begin{equation}
\label{eq:HRR}
\chi(\mc{L}^\lambda) \stackrel{!}{=} \int_\mc{M} \text{ch}(\mc{L}^\lambda) \text{td}(\mc{M})
\end{equation}
where \(\text{ch}(\mc{L}^\lambda) = \exp(\frac{\lambda}{4\pi} [\omega_\mc{M}])\) is the Chern character of \(\mc{L}^\lambda\) and \(\text{td}(\mc{M})\) is the Todd class of the tangent bundle of \(\mc{M}\). This is an entirely topological decription of the \textit{a priori} analytic quantity \(\chi(\mc{L}^\lambda)\).
The main aim of this note is to compute \(\chi(\mc{L}^\lambda)\) in the case of \(G = U(N_c)\) and \(Y = \mathbb{C}^{N_cN_f}\).

There are two subtleties which should be addressed, as follows.
\begin{itemize}
	\item \emph{Polarisation dependence}. Generally, geometric quantisation requires a choice of polarisation. This is, roughly speaking, a choice between `position' and `momentum' representations for the quantum states. In our context, the fact that the moduli space \(\mc{M}\) is K\"ahler allows us to gloss over this issue: the polarisation here is the choice of holomorphic structure on the bundle \(\mc{L}\). Polarisations of this type are called \emph{K\"ahler polarisations}.
	
	It is an interesting question to ask how the Hilbert space changes as one varies the choice of polarisation. For pure Chern--Simons theory quantised on the product of a smooth, closed two-dimensional surface \(\Sigma\) and the real line, the space of K\"ahler polarisations is the moduli space of complex structures on \(\Sigma\) and the bundle of Hilbert spaces over this space carries a projectively flat connection, the \emph{Hitchin connection} \cite{hitchinFC} (see also \cite{wittenQFT,axelrodGQ,andersenHC}). This demonstrates that, in this instance, the choice of K\"ahler polarisation is unimportant. 
	
	On the other hand, it was argued in \cite{erikssonKQ} that for quantisations of moduli spaces of local Abelian vortices it is generally not possible to construct a similar projectively flat connection on the moduli space of K\"ahler polarisations. 
	
	For us, the details of different polarisations are relatively unimportant, as we simply compute a topological invariant.  
	
	\item \emph{The metaplectic correction}. The space of quantum states should have a natural inner product. One way to ensure that this is so is to `upgrade' the quantum states to \(\mc{L}\)-valued half-densities on \(\mc{M}\), so that they pair to give a density that can be integrated over \(\mc{M}\). Mathematically, this means taking the quantum states to be sections of \(\mc{L} \otimes K_\mc{M}^\frac{1}{2}\) rather than sections of \(\mc{L}\), where \(K_\mc{M}^\frac{1}{2}\) is a square root of the canonical bundle of \(n\)-forms on \(\mc{M}\), if such a square root exists.
	
	The metaplectic correction is often necessary in the case of non-K\"ahler polarisations, as otherwise the Hilbert space that one obtains is often empty. In general though, the metaplectic correction may not even exist, as is generally the case for Chern--Simons theory \cite{axelrodGQ} and for quantisations of Abelian vortex moduli spaces \cite{erikssonKQ}. The fact that metaplectic corrections do not generally exist for the theories which we are interested in suggests that we should not consider them and so we will not attempt to include a metaplectic correction in our quantisation. 
\end{itemize}

%%%%%%%%%%%%%%%%%%%%%%%%%%%%%%%%%%%%%%%%%%%%%%%%%%%%%%%%%%%%%%%%%%%%%%%%%%%%%%%%%%%%%%%%%%%%%%%%%%%%%%%%%%%%%%%%%
\subsection{The path integral}

There is another approach to quantising the theory: the path integral approach. The partition function for the low-temperature theory on \(\Sigma\times S^1\) is written schematically as
\[
Z = \int_{\mc{L}_0\mc{M}} \mc{D}z \exp\left(\ii \frac{\lambda}{2\pi}\int_{S^1} z^* \theta_\mc{M} \right)  
\]
where \(\theta_\mc{M}\) is a local symplectic potential on \(\mc{M}\) and \(\mc{L}_0 \mc{M}\) is the space of contractible loops on \(\mc{M}\).

On general grounds, the partition function is the dimension of the Hilbert space of the theory. One can compute it directly using localisation techniques (see, for example, \cite{szaboEL}). Localisation allows one to reduce the path integral to an integral over the space of constant loops, which is simply \(\mc{M}\). Up to some potential topological subtleties, this integral over \(\mc{M}\) is exactly that on the right-hand-side of \eqref{eq:HRR}. 

%%%%%%%%%%%%%%%%%%%%%%%%%%%%%%%%%%%%%%%%%%%%%%%%%%%%%%%%%%%%%%%%%%%%%%%%%%%%%%%%%%%%%%%%%%%%%%%%%%%%%%%%%%%%%%%%%
%%%%%%%%%%%%%%%%%%%%%%%%%%%%%%%%%%%%%%%%%%%%%%%%%%%%%%%%%%%%%%%%%%%%%%%%%%%%%%%%%%%%%%%%%%%%%%%%%%%%%%%%%%%%%%%%%
\section{Quantisation of Abelian vortex moduli}
\label{sec:avm}

%%%%%%%%%%%%%%%%%%%%%%%%%%%%%%%%%%%%%%%%%%%%%%%%%%%%%%%%%%%%%%%%%%%%%%%%%%%%%%%%%%%%%%%%%%%%%%%%%%%%%%%%%%%%%%%%%
\subsection{Characteristic classes}

In the Abelian case of \(N_c = 1\), we will compute the Euler characteristic \(\chi(\mc{L}^\lambda)\) rigorously by a direct integration over the moduli space using the Hirzebruch--Riemann--Roch theorem \eqref{eq:HRR}. To do this, we need to get to grips with the integrand. Here we recall some relevant information about characteristic classes.

Let \(X\) be a complex manifold and let \(E \to X\) be a complex vector bundle or rank \(n\). If 
\[
c_t(E) = 1 + c_1(E)t + c_2(E)t^2 + \cdots + c_n(E) t^n
\]
is the Chern polynomial of \(E\), the \emph{Chern roots} \(\{\alpha_i(E)\}_{i=1, \cdots, n}\) of \(E\) are defined by
\[
c_t(E) = \prod_{i=1}^n (1+ \alpha_i(E) t)\text{.}
\]
Each of the \(\alpha_i\) has cohomological degree 2. The fact that the Chern polynomial can be written in this way is a result of the splitting principle. 

The \emph{Todd class} of \(E\) is then defined by
\[
\text{td}(E) \coloneqq \prod_{i=1}^n Q(\alpha_i(E))
\]
where
\[
Q(x) = \frac{x}{1-e^{-x}}\text{,}
\]
which we interpret in terms of a formal power series. It is immediate from this definition that the Todd class is multiplicative over exact sequences of vector bundles. 
We write \(\text{td}(X) \coloneqq \text{td}(T_X)\).

\begin{ex}[Projective space]
The tangent bundle of \(\mathbb{P}^n\) fits into the Euler exact sequence
\[
0 \to \mc{O} \to \mc{O}^{\oplus(n+1)} \to T_{\mathbb{P}^n} \to 0 \text{.}
\]	
Thus
\[
\text{td}(\mathbb{P}^n) = \left( \frac{\xi}{1-e^{-\xi}} \right)^{n+1}
\]
where \(\xi = c_1 (\mc{O}(1))\) is the generator of \(H^2(\mathbb{P}^n)\).
\end{ex}

\begin{ex}[Projective bundles]
\label{ex:projbundle}
Let \(V \to Y\) be a vector bundle of rank \(r\), and let \(X \coloneqq \mathbb{P}(V) \xrightarrow{p} Y\) be the corresponding projective bundle, given by projectivising the fibres of \(V\). The space \(X\) carries the natural tautological line bundle \(\mc{O}_X(1)\). 

There are two useful exact sequences, the relative Euler sequence
\[
0 \to \mc{O}_X \to p^*V \otimes \mc{O}_X(1) \to T_{X/Y} \to 0
\]
and the (holomorphic) Atiyah sequence
\[
0 \to T_{X/Y} \to T_X \to p^* T_Y \to 0 \text{.}
\]
These reveal that
\[
\text{td}(X) =   \text{td}(p^*V \otimes \mc{O}_X(1)) p^*\text{td}(Y) \text{.}
\]
\end{ex}

%%%%%%%%%%%%%%%%%%%%%%%%%%%%%%%%%%%%%%%%%%%%%%%%%%%%%%%%%%%%%%%%%%%%%%%%%%%%%%%%%%%%%%%%%%%%%%%%%%%%%%%%%%%%%%%%%
\subsection{Local Abelian vortices}

Let us begin in the local Abelian case of \(N_c = N_f = 1\). In this case, our discussion retreads some of the ideas of \cite{macdonaldSP} on the cohomology rings of symmetric products of Riemann surfaces. The results of \cite{macdonaldSP} were used in \cite{deyGQ} to compute the expected dimension of the Hilbert space of the vortex quantum mechanics in the special case \(\lambda = 1\) and in \cite{mantonVVM} to compute the volume of the moduli space of local Abelian vortices. 

The moduli space of local Abelian vortices on a Riemann surface \(\Sigma\) of genus \(g\) is well-known to be the space of effective divisors on \(\Sigma\) \cite{taubesCI,noguchiYMH,bradlowVLB,garciapradaDEP}. This can be thought of as the space of holomorphic line bundles with normalised holomorphic section, which facilitates the following construction of the vortex moduli space as a complex manifold\footnote{Another, direct, way to think of vortices in these terms follows from looking at the dissolving vortex limit \cite{wehrheimVI,baptistaDV}, a kind of (shifted) weak-coupling limit, where the vortex equations explicitly become the equations for a line bundle with a flat connection with normalised holomorphic section.}.

\begin{figure}[ht]
\centering
\begin{tikzpicture}[scale = 0.8]
  \useasboundingbox (-3,-1.5) rectangle (3,4);
  \draw (-2,3) circle (1cm) node[anchor=south]{\(\mathbb{C}P^{k-g}\)};
  \draw (-3,3) arc (180:360:1 and 0.3);
  \draw[dashed] (-2,2) -- (-2,0.5);
  \draw[fill=black] (-2,0.5) circle (0.025);
  \draw (0,0) ellipse (3 and 1.5) node[anchor=north, yshift=-0.5cm]{\(P^k(\Sigma)\)};
    \clip (0,-1.8) ellipse (3 and 2.5);
    \draw (0,2.2) ellipse (3 and 2.5);
    \clip (0,2.2) ellipse (3 and 2.5);
    \draw (0,-2.2) ellipse (3 and 2.5);
\end{tikzpicture}	
\caption{A sketch of the moduli space of charge \(k\) local Abelian vortices for \(k > 2g-2\). The moduli space fibres over the torus \(P^k(\Sigma)\), which is the moduli space of charge \(k\) holomorphic line bundles, with fibre \(\mathbb{C}P^{k-g}\), which is the space of normalised sections.  When \(k \leq 2g-2\), there are degenerate fibres (see \cite{mantonVJV}).} 
\label{fig:abmoduli}
\end{figure}

Write \(P^k(\Sigma)\) for the moduli space of holomorphic line bundles on \(\Sigma\) of degree \(k\). This is a torus of complex dimension \(g\). Let 
\[
\mc{U} \to P^k (\Sigma) \times \Sigma
\]
be the universal degree \(k\) line bundle. This is well-defined only after choosing a point \(x \in \Sigma\) and asking that \(\mc{U}|_{P^k(\Sigma) \times \{x\}} \) be trivial. 

Writing \(q : P^k(\Sigma) \times \Sigma \to P^k(\Sigma)\) for the projection, we consider the direct image sheaf \(q_!\mc{U}\) over \(P^k(\Sigma)\). If \(k > 2g - 2\), this is locally free of constant rank and defines a vector bundle \(V\) over \(P^k(\Sigma)\) with fibre over a holomorphic line bundle \(\mc{L}\) the space of holomorphic sections of \(\mc{L}\). The moduli space of vortices is then
\[
\mc{M} \cong \mathbb{P}(V) \text{,}
\]
which admits a projection map \( p : \mc{M} \to P^k(\Sigma)\) and supports the tautological line bundle \(\mc{O}_\mc{M}(1)\).

Combining this with \cref{ex:projbundle} tells us that the Todd class of \(\mc{M}\) is
\[
\text{td}(\mc{M}) = \text{td}(p^*V \otimes \mc{O}_\mc{M}(1)) p^*\text{td}(P^k(\Sigma)) \text{.}
\]
The space \(P^k(\Sigma)\) is a torus and so has trivial tangent bundle, so the Todd class becomes
\[
\text{td}(\mc{M}) = \text{td}(p^*V \otimes \mc{O}_\mc{M}(1))\text{.}
\]

To compute the Chern roots of \(V = q_! \mc{U}\), we use the Grothendieck--Riemann--Roch theorem. We have
\begin{align}
\text{ch}(V) &= \text{ch}(q_! \mc{U}) \nonumber \\
	&\stackrel{!}{=} q_* \left( \text{ch}(\mc{U}) \text{td}(\Sigma) \right) \text{,} \label{eq:GRR}
\end{align}
where the first equality follows from the definition of \(V\) and the second follows from the Grothendieck--Riemann--Roch theorem.
Write \(\hat{\omega}\) for the generator of \(H^2(\Sigma,\mathbb{Z})\). Then
\[
\text{td}(\Sigma) = 1 + (1-g)\hat{\omega}
\]
and
\[
c_1(\mc{U}) = k \hat{\omega} + t
\]
where \(t \in H^1(\Sigma) \otimes H^1(P^k(\Sigma))\) is the class dual to the natural evaluation map.

Using the fact that \(\text{ch}(\mc{U}) = \exp(c_1(\mc{U}))\) and the Grothendieck--Riemann--Roch formula \eqref{eq:GRR}, we see that
\begin{align*}
\text{ch}(V) =  k + 1 -g + \frac{1}{2} q_* t^2 \text{.}
\end{align*}
The class \(\sigma = - \frac{1}{2} q^* t^2 \in H^2(P^k (\Sigma))\) is that of the theta divisor.

This reveals that \(\text{rank}(V) = k + 1 -g\), that \(c_1(V) = - \sigma\), and that all of the higher Chern classes conspire to cancel the higher contributions to the Chern character. This means that
\[
c_i (V) = \frac{1}{i!} (-\sigma)^i \text{,}
\]
and the Chern polynomial is \(c_t(V) = e^{-\sigma t}\) (note that this truncates because \(\sigma^{g+1} = 0\) for dimensional reasons). 

The projective bundle formula reveals that the cohomology ring of \(\mc{M}\) takes the form
\[
H^\bullet(\mc{M}) = H^\bullet(P^k(\Sigma))[\xi]/\mc{R}
\]
where the relation \(\mc{R}\) is
\[
\sum_{i=0}^g  \frac{1}{i!} (-\sigma)^i \xi^{k-g+1-i} = 0 \text{.}
\]

Write \(\sigma_i\), \(i =1, \cdots, g\) for the Chern roots of \(V\). These obey \(\sigma_i^2 = 0\) for all \(i\).
Then the Todd class of \(\mc{M}\) is
\begin{align*}
\text{td}(\mc{M}) &= \text{td}(p^* V \otimes \mc{O}_\mc{M}(1) ) \\
	&= \left( \frac{\xi}{1-e^{-\xi}} \right)^{k - 2g + 1} \prod_{i=1}^g \frac{\xi - \sigma_i}{1 - e^{-(\xi - \sigma_i)}} \\
	&= \left( \frac{\xi}{1-e^{-\xi}} \right)^{k - g + 1} e^{-\sigma X}
\end{align*}
where, formally, \(X = \frac{1}{\xi} - \frac{e^{-\xi}}{1-e^{-\xi}}\) and we interpret \(e^{-\sigma X}\) as a formal power series which truncates.

The K\"ahler class of the moduli space is
\[
\frac{1}{2\pi} [\omega_\mc{M}] = d \xi + \sigma 
\]
where \(d = \mc{A} - k\). To see this, one can write out representatives for \(\xi\) and \(\sigma\) directly (see \cite{perutzSF,baptistaL2}, for example).

Let \(\mc{L}\) be a holomorphic line bundle with \(c_1(\mc{L}) = d \xi + \sigma\). 
There is a natural choice for this quantum line bundle \(\mc{L}\), namely
\[
\mc{L} = \mc{O}_\mc{M}(d) \otimes p^* \text{det}(V)\text{,}
\]
although we do not insist on this choice (indeed, \(V\) was defined with respect to an arbitrary choice of normalisation for the Poincar\'e line bundle, so the bundle above is not canonically determined).

Ideally, we would like to compute 
\[
h^0(\mc{L}^\lambda) = \text{dim} H^0(\mc{L}^\lambda)
\]
where \(\lambda\) is a parameter (physically, it is the Chern--Simons level). This is generally not accessible, and may depend on the particular choice of \(\mc{L}\). Instead, we would like to compute the Euler characteristic
\[
\chi (\mc{L}^\lambda) = \sum_{i} (-1)^i h^i(\mc{L}^\lambda) \text{.}
\]
The Riemann--Roch theorem tells us that
\[
\chi ( \mc{L}^\lambda ) = \int_\mc{M}\text{ch}(\mc{L}^\lambda) \text{td}(\mc{M})
\]
which is
\[
\int_\mc{M} e^{\lambda ( d \xi + \sigma )} \left( \frac{\xi}{1-e^{-\xi}} \right)^{k - g + 1} e^{-\sigma X} \text{.}
\]
This becomes
\begin{align*}
\chi(\mc{L}^{\lambda} ) &= \sum_{j=0} \frac{\lambda^j}{j!} \int_\mc{M} \sigma^j e^{\lambda d \xi} 	\left( \frac{\xi}{1-e^{-\xi}} \right)^{k - g + 1} e^{-\sigma \left( \frac{1}{\xi} - \frac{e^{-\xi}}{1-e^{-\xi}} \right)} \\
&= \sum_{j=0} \frac{\lambda^j}{j!} \int_\mc{M} \sigma^j e^{\lambda d \xi} 	\left( \frac{\xi}{1-e^{-\xi}} \right)^{k - g + 1} e^{\sigma \left(\frac{e^{-\xi}}{1-e^{-\xi}} \right)}
\end{align*}
using the relation \(\mc{R}\). Expanding the exponential, collecting the powers of \(\sigma\), and removing terms that do not contribute to the integral, we find
\begin{align*}
\chi(\mc{L}^{ \lambda} ) &= \sum_{j=0}^g \frac{\lambda^j}{j! (g-j)!} \int_\mc{M} \sigma^g \xi^{k-g+1} \frac{e^{(\lambda d - g + j)\xi} }{(1-e^{-\xi})^{k-j+1}} \\
&= \sum_{j=0}^g \lambda^j {g \choose j} \text{Res}_{x=0} \frac{e^{(\lambda d - g + j )x}}{(1-e^{-x})^{k-j+1}} \dd x \text{,}
\end{align*}
where we have used that the integral of \(\sigma^g\) over the Jacobian is \(g!\), which is a classical result. 

To compute a residue of the form
\[
\text{Res}_{x=0} \frac{e^{px}}{(1-e^{-x})^{q+1}} \dd x
\]
one may introduce a variable \(y\) by \(1 - e^{-x} = y\). Then \(e^x = (1-y)^{-1}\) and \(\dd x = \frac{1}{1-y} \dd y\). Then the residue is
\[
\text{Res}_{y=0} \frac{(1-y)^{-p-1}}{y^{q+1}} \dd y = {p + q \choose q}\text{.}
\]

Putting the pieces together gives us the following result.

\begin{theorem}
\label{theorem:localab}
Let \(\mc{L} \to \mc{M}\) be the quantum line bundle over the moduli space of local Abelian vortices of charge \(k\) on a closed Riemannian surface \(\Sigma\) of genus \(g\) and dimensionless `area' \(\mc{A} = \frac{e^2 \tau \text{vol}(\Sigma)}{4\pi} \in \mathbb{Z}\). Suppose that \(k > 2g-2\). Then the Hilbert polynomial of \(\mc{L}\) is
\[
\chi(\mc{L}^{\lambda} ) =  \sum_{j = 0}^{g} \lambda^j {g \choose j}{\lambda( \mc{A} - k) + k - g \choose k-j} \text{.}
\]	
\end{theorem}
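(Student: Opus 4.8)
The plan is to compute $\chi(\mc{L}^\lambda)$ by direct integration over $\mc{M}$ via Hirzebruch--Riemann--Roch, exploiting the description of $\mc{M}$ as the projective bundle $\mathbb{P}(V) \to P^k(\Sigma)$ over the Jacobian. The only genuinely global input I need is the topology of $V = q_!\mc{U}$, which I would extract from the Grothendieck--Riemann--Roch theorem: writing $c_1(\mc{U}) = k\hat{\omega} + t$ and $\text{td}(\Sigma) = 1 + (1-g)\hat{\omega}$, pushing $\text{ch}(\mc{U})\text{td}(\Sigma)$ forward along $q$, and using that $-\tfrac{1}{2} q_* t^2 = \sigma$ is the theta class, one finds $\text{rank}(V) = k-g+1$ and the remarkably clean Chern polynomial $c_t(V) = e^{-\sigma t}$. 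Since $\sigma^{g+1} = 0$ on the Jacobian this truncates, and the Chern roots $\sigma_i$ satisfy $\sigma_i^2 = 0$, which is the feature that makes everything downstream collapse.

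With $V$ in hand, the Todd class follows from the projective-bundle formula of \cref{ex:projbundle}: because $P^k(\Sigma)$ is a torus its tangent bundle is trivial, so $\text{td}(\mc{M}) = \text{td}(p^*V \otimes \mc{O}_\mc{M}(1))$. I would feed the Chern roots of $V$ into $\prod_i Q(\xi - \sigma_i)$ and collapse the product using $\sigma_i^2 = 0$ to reach the closed form $\left(\xi/(1-e^{-\xi})\right)^{k-g+1} e^{-\sigma X}$ with $X = \xi^{-1} - e^{-\xi}/(1-e^{-\xi})$. Combined with $\text{ch}(\mc{L}^\lambda) = e^{\lambda(d\xi+\sigma)}$, coming from the identification $\tfrac{1}{2\pi}[\omega_\mc{M}] = d\xi + \sigma$ with $d = \mc{A}-k$, the HRR integrand is then fully explicit.

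The next step is the integral itself. I would expand the two $\sigma$-exponentials, keep only the $\sigma^g$ term (lower powers fail to saturate the Jacobian fibre and higher powers vanish), then use the classical evaluation $\int_{P^k(\Sigma)} \sigma^g = g!$ together with the relation $\mc{R}$ to trade the $\xi$-integral for a residue at $x=0$. This reduces the computation to $\sum_{j=0}^g \lambda^j \binom{g}{j}\,\text{Res}_{x=0}\, e^{(\lambda d - g + j)x}(1-e^{-x})^{-(k-j+1)}\,\dd x$. The final move is the substitution $y = 1-e^{-x}$, turning each residue into $\text{Res}_{y=0}(1-y)^{-p-1}y^{-q-1}\,\dd y = \binom{p+q}{q}$ with $p = \lambda(\mc{A}-k)-g+j$ and $q = k-j$; since $p+q = \lambda(\mc{A}-k)+k-g$, this yields $\binom{\lambda(\mc{A}-k)+k-g}{k-j}$ and hence the stated sum.

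I expect the main obstacle to be the middle step: reorganising the formal power series for $\text{td}(\mc{M})$ so that the $\sigma$-dependence collapses cleanly under nilpotency, and then using $\mc{R}$ to eliminate the leftover $\xi^{-1}$ terms while tracking carefully that only the single top-degree class $\sigma^g \xi^{\dim\mc{M}-g}$ survives the integral. The geometric identification of $\mc{M}$ as a projective bundle and the GRR computation of $V$ are conceptually the heart, but once $c_t(V) = e^{-\sigma t}$ is established the rest is routine power-series bookkeeping followed by a standard residue. The hypothesis $k > 2g-2$ enters precisely to guarantee that $q_!\mc{U}$ is locally free of constant rank, so that the projective-bundle picture is valid in the first place.
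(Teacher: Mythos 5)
Your proposal is correct and follows essentially the same route as the paper's own argument: the identification $\mc{M}\cong\mathbb{P}(q_!\mc{U})$, the Grothendieck--Riemann--Roch computation giving $c_t(V)=e^{-\sigma t}$, the collapse of the Todd class using $\sigma_i^2=0$, the extraction of the $\sigma^g$ term with $\int\sigma^g=g!$, and the substitution $y=1-e^{-x}$ turning each residue into $\binom{p+q}{q}$ are all exactly the steps taken there. Your reading of the hypothesis $k>2g-2$ as the condition for $q_!\mc{U}$ to be locally free of constant rank also matches the paper.
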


\begin{remark}
Our proof of this result only holds literally in the case that \(k \geq 2g-1\). However, the method still goes through in a virtual sense for all \(k\), and the result holds in general. Note that if \(k < g\), the terms with \(j = k +1, \cdots, g \) all vanish.
\end{remark}

\begin{remark}
One can express the Euler characteristic as a single residue as
\[
\chi(\mc{L}^{\otimes \lambda} ) = \text{Res}_{x=0} \left( e^{-x} + \lambda (1-e^{-x})\right)^g \frac{e^{\lambda d x}}{(1-e^{-x})^{k+1}} \dd x \text{.}
\] 
This is the form that one might find the answer in if one exploited Coloumb branch localisation. This makes particularly clear the simplification that occurs if \(\lambda = 1\).
\end{remark}

This theorem mediates between two known results. Setting \(\lambda = 1\) computes the Euler characteristic of the quantum line bundle \(\mc{L}\), which is the expected dimension of the Hilbert space of the vortex quantum mechanics at Chern--Simons level 1. This is
\begin{align*}
\chi(\mc{L}) &= \sum_{j=0}^{g} {g \choose j} {\mc{A} \choose k - j } \\
	&= {\mc{A} \choose k} \text{,}
\end{align*}
as previously found in \cite{erikssonKQ} and in \cite{deyGQ}. In the full quantum theory, one should sum over topological sectors with a fugacity \(t\) for the topological symmetry. The resulting \emph{grand canonical partition function} is
\begin{align*}
Z(t) &= \sum_{k=0}^{\infty} {\mc{A} \choose k} t^k \\
	&= (1+t)^\mc{A} \text{.}
\end{align*}

This result exhibits a kind of `vortex-hole duality'. By virtue of the Bradlow bound, a configuration of \(k\) vortices in a normalised area \(\mc{A}\) might also be viewed as a configuration of \(\mc{A} - k\) vortex holes in a sea of \(\mc{A}\) vortices. This is reflected in the vortex quantum mechanics by the equality
\[
{\mc{A} \choose k} = {\mc{A} \choose \mc{A} - k} \text{.}
\]

On the other hand, the volume of the vortex moduli space is (up to constant factors)
\[
\text{vol}(\mc{M}) = \int_\mc{M} e^{[\omega_\mc{M}]} \text{.}
\]
Because the leading term in the Todd class is unity, this is the coefficient of \(\lambda^{\text{dim}(\mc{M})}\) in \(\chi(\mc{L}^{ \lambda})\) (this is the highest power of \(\lambda\) that appears). We can read this off to be
\[
\sum_{j=0}^{\text{min}(k,g)} {g \choose j} \frac{(\mc{A} - k)^{k-j}}{(k-j)!}
\]
in agreement with the computations of \cite{mantonVVM}, \cite{miyakeVMS} and \cite{ohtaHCB}. 

%%%%%%%%%%%%%%%%%%%%%%%%%%%%%%%%%%%%%%%%%%%%%%%%%%%%%%%%%%%%%%%%%%%%%%%%%%%%%%%%%%%%%%%%%%%%%%%%%%%%%%%%%%%%%%%%%
\subsection{Semilocal Abelian vortices}

Introducing more flavours does not significantly alter the nature of the above calculation. As revealed by the dissolving vortex limit, the moduli space is again a projective bundle. Indeed, if the number of flavours is \(N_f\), the moduli space of charge \(k\) vortices is, for \(k\) sufficiently large,
\[
\mc{M}_{N_f} = \mathbb{P}( V^{\oplus N_f})
\]
where \(V \to P^k(\Sigma)\) is the vector bundle introduced previously. The dimension of the moduli space is \(N_f(k-g+1) + g - 1\), in agreement with an index calculation.

Multiplicativity of the Chern class means that
\begin{align*}
c(V^{\oplus N_f}) &= c(V)^{N_f} \\
	&= e^{- N_f \sigma} \text{.}
\end{align*}
Again, the formal power series implied by \(e^{-N_f \sigma}\) truncates because \(\sigma^{g+1} = 0\).

The cohomology ring of the moduli space can be described using the projective bundle formula to be
\[
H^\bullet (\mc{M}_{N_f}) = H^\bullet(P^k(\Sigma)) [\xi]/ \mc{R}
\]
where the relation \(\mc{R}\) is
\[
\sum_{i=0}^g \frac{1}{i!} (-N_f \sigma)^i \xi^{N_f(k-g+1) - i} = 0\text{.} 
\]

The Todd class of \(\mc{M}_{N_f}\) is 
\[
\text{td}(\mc{M}_{N_f}) = \left( \frac{\xi}{1-e^{-\xi}} \right)^{N_f(k-g+1)} e^{-N_f \sigma X}\text{,}
\]
where \(X = \frac{1}{\xi} - \frac{e^{-\xi}}{1-e^{-\xi}}\) as before. 

The cohomology class of the \(L^2\) K\"ahler form on the vortex moduli space is (just as before)
\[
\frac{1}{2\pi} [\omega_{\mc{M}_{N_f}}] = d \xi + \sigma \text{,}
\]
where \(d = \mc{A} - k\).
(Note that the second cohomology group of the moduli space has two generators, so all that one needs to do to verify this is to check the coefficients.)

We again compute the Euler characteristic of a quantum line bundle \(\mc{L} \to \mc{M}_{N_f}\) with \(2\pi c_1(\mc{L}) = [\omega_{\mc{M}_{N_f}}]\). There is again a natural choice, given by
\[
\mc{L} = \mc{O}_{\mc{M}_{N_f}}(d) \otimes p^*\text{det}(V)\text{,}
\]
although, again, we do not insist on this, noting that it is not canonically determined.

Going through the motions as we did for the case of \(N_f = 1\) leads to the result
\begin{align*}
\chi(\mc{L}^{\lambda} ) &= \sum_{j=0}^g \lambda^jN_f^{g-j} {g \choose j} \text{Res}_{x=0} \frac{e^{(\lambda d - (g-j) )x}}{(1-e^{-x})^{N_f(k-g+1) + g - j}}  \dd x\\
	&= \sum_{j=0}^g \lambda^j N_f^{g-j} {g \choose j}{\lambda d + N_f(k-g +1) - 1 \choose N_f(k - g + 1) +g-j-1}\text{.}
\end{align*}
Substituting in \(d = \mc{A} - k\) and introducing the notation \(\delta = \text{dim}_\mathbb{C}(\mc{M}_{N_f})\) leads to the following theorem.

\begin{theorem}
\label{theorem:semilocalab}
Let \(\mc{M}_{N_f}\) be the moduli space of charge \(k\) vortex configurations in a \(U(1)\) gauge theory with \(N_f\) fundamental flavours on a closed Riemannian surface \(\Sigma\) of genus \(g\) and dimensionless `area' \(\mc{A} = \frac{e^2 \tau \text{vol}(\Sigma)}{4\pi} \in \mathbb{Z}\), supposing that \(k > 2g-2\). Let \(\mc{L} \to \mc{M}_{N_f}\) be a quantum line bundle for the \(L^2\) vortex K\"ahler form on \(\mc{M}_{N_f}\). Then
\[
	\chi(\mc{L}^\lambda) = \sum_{j=0}^g \lambda^j N_f^{g-j} {g \choose j} {\lambda (\mc{A} - k) + \delta -g \choose \delta-j}\text{,}
\]
where \(\delta = N_f(k-g+1) + g - 1\) is the dimension of the moduli space.
\end{theorem}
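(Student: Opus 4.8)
The plan is to import wholesale the geometric description and characteristic-class computations that worked in the local case (\(N_f = 1\), \autoref{theorem:localab}), making the single substitution \(V \rightsquigarrow V^{\oplus N_f}\), and then rerun the Hirzebruch--Riemann--Roch calculation. First I would establish that, for \(k > 2g-2\), the moduli space is the projective bundle \(\mc{M}_{N_f} \cong \mathbb{P}(V^{\oplus N_f})\) over the Jacobian \(P^k(\Sigma)\), where \(V = q_!\mc{U}\) is the rank \(k-g+1\) bundle whose Chern polynomial was already found to be \(c_t(V) = e^{-\sigma t}\). The identification comes from the dissolving-vortex limit, in which a semilocal vortex is a degree-\(k\) line bundle together with an \(N_f\)-tuple of holomorphic sections up to overall scale; the hypothesis \(k > 2g-2\) is precisely what forces the direct image to be locally free, so that \(\mathbb{P}(V^{\oplus N_f})\) is genuinely a bundle. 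The complex dimension is then \(\delta = N_f(k-g+1) + g - 1\).

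Next I would assemble the integrand of \eqref{eq:HRR}. Multiplicativity of the total Chern class gives \(c(V^{\oplus N_f}) = c(V)^{N_f} = e^{-N_f\sigma}\), which still truncates since \(\sigma^{g+1} = 0\); the Chern roots of \(V^{\oplus N_f}\) are thus those of \(V\) each repeated \(N_f\) times. Feeding these into the relative Euler and Atiyah sequences of \autoref{ex:projbundle}, and using that \(P^k(\Sigma)\) is a torus with trivial Todd class, produces
\[
\text{td}(\mc{M}_{N_f}) = \left(\frac{\xi}{1-e^{-\xi}}\right)^{N_f(k-g+1)} e^{-N_f\sigma X}, \qquad X = \frac{1}{\xi} - \frac{e^{-\xi}}{1-e^{-\xi}},
\]
the factor of \(N_f\) appearing because each root \(\sigma_i\) contributes \(N_f\) equal factors and \(\sigma_i^2 = 0\) linearises the exponential. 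I would then record the projective-bundle relation \(\mc{R}\) and the class \(\frac{1}{2\pi}[\omega_{\mc{M}_{N_f}}] = d\xi + \sigma\) with \(d = \mc{A} - k\), the latter pinned down by matching coefficients against the two generators \(\xi,\sigma\) of \(H^2\).

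The computation is then mechanical: substitute \(\text{ch}(\mc{L}^\lambda) = e^{\lambda(d\xi+\sigma)}\) and the Todd class into \eqref{eq:HRR} and expand in powers of \(\sigma\). The key simplification is to use \(\mc{R}\) to discard the \(\tfrac{1}{\xi}\) piece of \(X\), leaving \(e^{\sigma(\lambda + N_f e^{-\xi}/(1-e^{-\xi}))}\); only the top power \(\sigma^g\) contributes to the integral, with coefficient \(\tfrac{1}{g!}(\lambda + N_f w)^g\) where \(w = e^{-\xi}/(1-e^{-\xi})\). Pushing forward along \(p\) with \(\int_{P^k(\Sigma)}\sigma^g = g!\) turns the remaining fibre integral into a residue, giving
\[
\chi(\mc{L}^\lambda) = \sum_{j=0}^g \lambda^j N_f^{g-j}\binom{g}{j}\,\text{Res}_{x=0}\frac{e^{(\lambda d - (g-j))x}}{(1-e^{-x})^{N_f(k-g+1)+g-j}}\,\dd x,
\]
and evaluating each residue by the substitution \(y = 1-e^{-x}\) (which yields \(\text{Res}_{x=0}\,e^{px}(1-e^{-x})^{-q-1}\,\dd x = \binom{p+q}{q}\)), then rewriting with \(\delta\) and \(d = \mc{A}-k\), produces the claimed formula.

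The step I expect to be the genuine obstacle — as opposed to bookkeeping — is justifying the manipulation of \(X\), whose \(\tfrac{1}{\xi}\) term introduces negative powers of \(\xi\) that do not live in the polynomial cohomology ring; one must argue that deleting it via \(\mc{R}\) is legitimate, i.e. that the removed contribution recombines into a multiple of the relation and hence integrates to zero. A secondary subtlety is that the projective-bundle description, and therefore the whole derivation, is literal only for \(k \geq 2g-1\); extending to all \(k\) requires the virtual argument flagged in the remark after \autoref{theorem:localab}. As a consistency check I would verify that setting \(N_f = 1\) recovers \autoref{theorem:localab} exactly.
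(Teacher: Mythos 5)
Your proposal is correct and follows essentially the same route as the paper: identify \(\mc{M}_{N_f} \cong \mathbb{P}(V^{\oplus N_f})\) for \(k > 2g-2\), use \(c(V^{\oplus N_f}) = e^{-N_f\sigma}\) to write down the Todd class and the projective-bundle relation, take \(\frac{1}{2\pi}[\omega_{\mc{M}_{N_f}}] = d\xi + \sigma\), and rerun the Hirzebruch--Riemann--Roch integral of the local case, landing on the same intermediate residue expression and the same substitution \(y = 1 - e^{-x}\). The subtleties you flag (legitimacy of discarding the \(1/\xi\) term of \(X\) via \(\mc{R}\), and the virtual extension to small \(k\)) are exactly the ones the paper glosses over or relegates to a remark.
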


\begin{remark}
While our proof only literally holds for \(k\) sufficiently large, the argument should go through in a virtual sense in the general case.\end{remark}

\begin{remark}
Again, one can express the Euler characteristic as a single residue. Indeed, one has
\[
\chi(\mc{L}^{\lambda}) = \text{Res}_{x=0} (N_f e^{-x} + \lambda (1- e^{-x}))^g \frac{e^{\lambda d x}}{(1-e^{-x})^{\delta +1}}  \dd x\text{,}
\]
which simplifies in the case that \(\lambda = N_f\).
\end{remark}

As in the local case, there are two particularly interesting corollaries of this result. Setting \(\lambda = N_f\), one finds the visually simple result
\begin{align*}
\chi(\mc{L}^{ N_f} ) &= N_f^g \sum_{j=0}^g {g \choose j}{ N_f(\mc{A} -k) + \delta - g \choose \delta - j} \\
	&= N_f^g { N_f(\mc{A} - k) + \delta \choose \delta} \text{.}
\end{align*}
We interpret this result in \autoref{sec:slduality}.

On the other hand, identifying the coefficient of the highest power of \(\lambda\) in \(\chi(\mc{L}^{ \lambda})\) gives us the volume of the vortex moduli space, up to factors of \(2\pi\). We read this off to be
\[
\sum_{j=0}^{\text{min}(\delta,g)} N_f^{g-j} {g \choose j} \frac{(\mc{A} - k)^{\delta-j}}{(\delta-j)!} \text{,}
\]
in agreement with the result of \cite{miyakeVMS,ohtaHCB}.

%%%%%%%%%%%%%%%%%%%%%%%%%%%%%%%%%%%%%%%%%%%%%%%%%%%%%%%%%%%%%%%%%%%%%%%%%%%%%%%%%%%%%%%%%%%%%%%%%%%%%%%%%%%%%%%%%
\subsection{Abelian vortices in a harmonic trap}
\label{subsec:harmonictrap}

There is a special class of deformations of the theory that `preserve critical coupling'. These are deformations induced from Hamiltonian actions of groups on the configuration space of the theory preserving the vortex equations.  Such deformations induce a potential on the moduli space for the Hamiltonian mechanics controlling the low-temperature behaviour of the theory. The potential is such that the Hamiltonian flow generates the symmetry.

Here we consider an example in the local Abelian case, that of a harmonic trap. The symmetry generating this is a rotational symmetry of space, so it requires that \(\Sigma\) has a rotational symmetry. We will take \(\Sigma\) to be the round sphere. This has previously been considered in \cite{romaoGVB}. In the case that \(\Sigma\) is the plane, the analogous story was considered in \cite{tongVMHT} and plays an important role in the dual approach to quantum Hall physics on the plane \cite{tongQHF} (as, in that case, there is no Bradlow bound and a different mechanism is necessary to induce the creation of a region of Coulomb phase). 

Suppose that \(\Sigma = S^2\) with a round metric and consider \(U(1)\) gauge theory with one fundamental flavour. The moduli space of \(k\)-vortex solutions is then the complex projective space \(\mc{M} = \mathbb{C}P^k\). A useful coordinate system for the moduli spaces can be given as follows. Let \(z\) be a complex coordinate on \(\Sigma \cong \mathbb{C}P^1\). A \(k\)-vortex configuration is uniquely specified by \(k\) points \((z_1, \cdots, z_k)\) on \(\Sigma\) and so we may parameterise the moduli space (away from configurations with vortices at \(z = \infty\)) with the polynomials
\[
\prod_{i=1}^k (z - z_i) \text{.}
\]
Expanding this polynomial gives an alternative parameterisation in terms of (an affine patch of) homogeneous coordinates \([w_0 : w_1 : \cdots : w_k]\) as
\[
\prod_{i=1}^k (z - z_i) = w_0 z^k + w_1 z^{k-1} + \cdots + w_k \text{.} 
\]
In this patch, \(w_0 = 1\), \(w_1 = \sum_i z_i\), and so on. 

The theory on the round sphere \(S^2\) has a global \(U(1)\) symmetry given by rotating the sphere around a fixed axis. This induces an action of \(U(1)\) on the space \(\mc{C}\) of vortex configurations in the obvious way, rotating the configuration. This action preserves the vortex equations and so descends to the moduli space of vortices.

We may align our complex coordinate \(z\) on \(\Sigma\) in such a way that \(e^{\ii\theta} \in U(1)\) acts via
\[
z \mapsto e^{\ii \theta} z \text{,}
\]
so that \(z = 0\) and \(z= \infty\) are the fixed points of the action. This action induces the action
\[
z_j \mapsto e^{\ii\theta}z_j
\]
for \(j =1,\cdots,k\) on the moduli space coordinates, which in turn implies that
\begin{equation}
\label{eq:groupaction1}
w_j \mapsto e^{\ii \theta j}w_j 
\end{equation}
for \(j = 1, \cdots, k\).

We can now carry out an equivariant geometric quantisation of the moduli space. Again, we will consider the simpler question of computing the equivariant index. See \cite[Section 2]{pestunLT} for an introduction to these ideas, which we summarise here. 

The quantum line bundle is \(\mc{O}(d)^\lambda \to \mathbb{C}P^k\), which lifts to an equivariant line bundle. We are interested in computing the equivariant Euler characteristic
\[
\chi_{U(1)}(\mc{O}(d)^\lambda ; q) = \sum_i (-1)^i\text{tr}_{ H^i(\mathbb{C}P^k, \mc{O}(d)^\lambda )} (q)
\]
of this equivariant line bundle. Here, for \(V\) a representation of \(U(1)\), \(\text{tr}_{V}(q) \) is the trace of \(q \in U(1)\) in the representation \(V\). If \(U(1)\) acts trivially, then any element of \(U(1)\) is represented as the identity and the trace recovers the dimension of the vector space, so that the equivariant Euler characteristic is the usual Euler characteristic in that case.

One way to compute this is using the equivariant Hirzebruch--Riemann--Roch theorem, which realises the equivariant Euler characteristic as an integral of an element of the equivariant cohomology of the moduli space. One has
\[
\chi_{U(1)}(\mc{O}(\lambda d) ; q) = \int_{\mathbb{C}P^k} \left( \text{ch}_{U(1)} (\mc{O}(\lambda d)) \text{td}_\text{U(1)} (\mathbb{C}P^k) \right)(q) \text{,}
\]
where \(\text{ch}_{U(1)} (\mc{O}(\lambda d) )\) is the equivariant Chern character of the bundle \(\mc{O}(\lambda d)\) and \(\text{td}_{U(1)} (\mathbb{C}P^k) \) is the equivariant Todd class of \(T_{\mathbb{C}P^k}\). 

As recalled in \cite{pestunLT,szaboEL}, this integral can be reduced to a sum of contributions from the fixed points of the group action. The outcome of this (when the fixed point set is discrete, which it is for us) is the Lefshetz formula
\[
\chi_{U(1)} (\mc{O}(\lambda d); q) = \sum_{p \in F}  \frac{\text{tr}_{\mc{O}(\lambda d)_p} (q)}{\text{det}_{T^{1,0}_{p}{\mathbb{C}P^k}}(1-q^{-1})} \text{.}
\]

We can apply this to the group action \eqref{eq:groupaction1}. This action has \(k+1\) fixed points \((p_0, \cdots, p_k)\). Physically, the fixed point \(p_i\) corresponds to \(i\) vortices sitting at \(z = 0\) and \(k-i\) vortices sitting at \(z=\infty\). The character of the \(U(1)\) representation \(\mc{O}(\lambda d)_{p_i}\) is given by
\[
\text{tr}_{\mc{O}(\lambda d)_{p_i}} (q) = q^{\lambda d i}
\]
and the determinant is
\[
\text{det}_{(T_{\mathbb{C}P^k})_{p_i}}(1-q^{-1}) = \prod_{j\neq i} (1-q^{j-i}) \text{.}
\]

The index is then
\[
\chi_{U(1)} (\mc{O}(\lambda d);q) =  \sum_{i =0}^k \frac{q^{i \lambda d}}{\prod_{ j \neq i} (1- q^{j-i})} 
\]
(see also \cite[section 2.9]{pestunLT} for a very similar computation). Rearranging this sum gives
\begin{align*}
\chi_{U(1)} (\mc{O}(\lambda d);q) &= \prod_{i=0}^k \frac{ (1-q^{i+\lambda d}) }{(1-q^i)} \\
	&\eqqcolon {\lambda d + k \choose k}_q \text{.}
\end{align*}
This is the \emph{\(q\)-binomial coefficient}, sometimes known as the \emph{Gaussian binomial coefficient}. It has the property that, for any \(n\) and \(k\),
\[
\lim_{q\to1} {n \choose k }_q = {n \choose k}
\]
so that we recover our previous result \autoref{theorem:localab} in the genus zero case.

Substituting in \(d = \mc{A} - k\), we have at \(\lambda = 1\),
\[
\chi_{U(1)} (\mc{O}( d);q) = {\mc{A} \choose k}_q \text{.}
\]

%%%%%%%%%%%%%%%%%%%%%%%%%%%%%%%%%%%%%%%%%%%%%%%%%%%%%%%%%%%%%%%%%%%%%%%%%%%%%%%%%%%%%%%%%%%%%%%%%%%%%%%%%%%%%%%%%
%%%%%%%%%%%%%%%%%%%%%%%%%%%%%%%%%%%%%%%%%%%%%%%%%%%%%%%%%%%%%%%%%%%%%%%%%%%%%%%%%%%%%%%%%%%%%%%%%%%%%%%%%%%%%%%%%
\section{Quantisation of nonAbelian vortex moduli}
\label{sec:navm}

%%%%%%%%%%%%%%%%%%%%%%%%%%%%%%%%%%%%%%%%%%%%%%%%%%%%%%%%%%%%%%%%%%%%%%%%%%%%%%%%%%%%%%%%%%%%%%%%%%%%%%%%%%%%%%%%%
\subsection{Coulomb branch localisation of the Euler characteristic}

After localising with respect to time translations and integrating out the time component of the gauge field, the (super) partition function of our theory is
\begin{equation}
\label{eq:eulercoulomb}
\chi(\mc{L}^\lambda) = \int_\mc{M} \text{td}(\mc{M}) \text{ch}(\mc{L}^\lambda) \text{,}
\end{equation}
where \(\mc{L} \to \mc{M}\) is a quantum line bundle on the moduli space of vortices.
When the gauge group is nonAbelian, we do not have a generators-and-relations description of the cohomology ring of the moduli space \(\mc{M}\), so it is hard to compute this integral directly. Instead, we will exploit the Jeffrey--Kirwan--Witten method of equivariant localisation (or `Coulomb branch localisation') to compute this integral. We do not give a complete account of this technique here, instead giving a fairly high-level exposition of the computation in the context relevant to us. We refer the reader to the original works \cite{jeffreyLNA,jeffreyLQC,witten2DGT} and to \cite{miyakeVMS,ohtaHCB}, which give a clear exposition of the technique in the context of computing the volumes of vortex moduli spaces. Our computation of \(\chi(\mc{L}^\lambda)\) generalises the results of these latter works.

The moduli space \(\mc{M}\) is the formal symplectic quotient of the space \(\mc{C}_0 = \{(A_\Sigma,\phi) \mid \db_A \phi = 0\} \subset \mc{C} = \{(A_\Sigma,\phi)\}\) by the group \(\mc{G}\) of gauge transformations. The lift of the tangent bundle of \(\mc{M}\) to \(\mc{C}_0\) is the equivariant virtual bundle
\[
 \text{ad}_\mc{G} \to T \mc{C}_0 \to \text{ad}_\mc{G}^\vee
\]
where \(\text{ad}_\mc{G}\) is the equivariant bundle \(\mc{C}_0 \times \text{Lie}(\mc{G}) \to \mc{C}_0\), carrying the adjoint action on the fibres. Here the first arrow is the derivative of the group action and the second arrow is the derivative of the moment map. The equivariant Todd class of this virtual bundle is
\[
\text{td}_\mc{G} (\mc{C}_0) \text{td}^{-1} _\mc{G} (\text{ad}_\mc{G} \oplus \text{ad}_\mc{G}^\vee) \text{.}
\]
In general, we must regard \(\mc{C}_0\) itself as a derived space, as the space of solutions to \(\db_A \phi = 0\) does not have constant dimension as \(A\) varies, so that \(T\mc{C}_0\) is itself a graded bundle.

The lift of the quantum bundle \(\mc{L}\) to \(\mc{C}_0\) is an equivariant line bundle \(\hat{\mc{L}}^\lambda\) with equivariant first Chern class equal to 
\[
c_1^\mc{G} (\hat{\mc{L}}^\lambda ) = \frac{\lambda}{2\pi} [\omega_{\mc{C}_0}^\mc{G} ] \in H^2_\mc{G} (\mc{C}_0) \text{,}
\]
where \(\omega_{\mc{C}_0}^\mc{G} = \omega_{\mc{C}_0} - \nu\) is the equivariant symplectic form (see \cite{atiyahMM}). 

The super partition function is then the integral
\[
\int_{ \mc{C}_0 \times \text{Lie}(\mc{G})} \left( \text{td}_\mc{G} (\mc{C}_0) \text{td}^{-1} _\mc{G} (\text{ad}_\mc{G} \oplus \text{ad}_\mc{G}^\vee) \text{ch}_\mc{G} (\hat{\mc{L}}^\lambda) \right)(\Phi) \, \mc{D} \Phi \text{.}
\]
Here \(\Phi\) is valued in the Lie algebra \(\text{Lie}(\mc{G})\) and \(\mc{D}\Phi\) is a measure on this space. Physically, this could be derived by computing the (super) partition function of the three-dimensional theory after localising with respect to time translations. The field \(\Phi\) is then the constant mode of the time-component of the gauge field.

We will use the notion of equivariant integration to localise this to an integral over the Lie algebra of the maximal torus of the group \(G\). This integral can then be expressed as a residue.

To do this, we note that the space \(\mc{C}_0 \times \text{Lie}(\mc{G})\) carries an infinitesimal \(U(1)\) action, generated by the variations 
\[
\delta((A_\Sigma, \phi), \Phi) = ((\dd_{A_\Sigma} \Phi, \Phi(\phi)), 0)\text{.}
\]
Our integrand is invariant under this action and we may localise with respect to it. The space of fixed points decomposes into a number of branches, including the Higgs branch, where \(\Phi = 0\), and the Coulomb branch, where \(\phi = 0\). The space of fixed points carries an action of the  torus \(T\cong U(1)^{N_c}\) of global Abelian gauge transformations. The fixed point locus of this \(T\) action is the Coulomb branch, where \(\phi = 0\) and \(\dd_{A_\Sigma} \Phi = 0\).

A generic point on the Coulomb branch defines a holomorphic vector bundle of the form
\[
E = \bigoplus_a L_a \to \Sigma
\]
where \(L_a\) has degree \(k_a\) and \(\sum_a k_a = k\), the total vortex number. Deformations of these \(T\)-invariant solutions within the space of configurations are given by pairs consisting of a holomorphic section \(\psi\) of \(E^{N_f}\) and a deformation of the holomorphic structure of \(E\), which is an element of 
\[
H^1(E\otimes E^\vee) = H^1\left(\mc{O}^{N_c} \oplus \bigoplus_{a \neq b} L_a \otimes L_b^{-1} \right) \text{.}
\]
Such a deformation remains \(T\)-invariant if \(\psi = 0\) and if the variation of holomorphic structure lies in \(H^1(\mc{O}^{N_c})\). 

Recall from above that the equivariant virtual bundle we need to consider is
\[
\text{ad}_\mc{G} \to T\mc{C}_0 \to \text{ad}_\mc{G}^\vee \text{.}
\]
Over a connected component of the \(T\)-invariant locus, the moment map is constant (as \(\phi = 0\) and by the quantisation of the Abelian flux). The image of the second map is therefore zero. We are left with a \(T\)-equivariant bundle whose fibre at the bundle \(E \in F\) is
\[
V_F = \left( H^0(E\otimes E^\vee) \to T_E F \oplus (\nu_F)_E \to 0 \right)
\]
where the bundle on the left is the bundle of infinitesimal automorphisms of a point \(E\) on \(F\) (that is, a residual gauge transformation) and \(\nu_F\) is the equivariant normal bundle to \(F\) in \(\mc{C}_0\), which takes the form
\[
\nu_F = \left( 0 \to H^1(\bigoplus_{a \neq b} L_a \otimes L_b^{-1}) \oplus H^0(E^{N_f}) \to H^1(E^{N_f})  \right)
\]
where we recall that we are working with the derived locus of the space of solutions to \(\db_A \phi = 0\).

The space \(F\) is torus of complex dimension \(N_c g\), with trivial tangent bundle. Putting this together, the class of \(V_F\) in equivariant K-theory is
\[
[V_F] = [T F] + \sum_a [H^\bullet (L_a^{N_f})] - \sum_{a\neq b} [H^\bullet(L_a \otimes L_b^{-1} ) ] \text{.}
\]
where by \(H^\bullet\) we mean the graded space \(H^0 - H^1\).

We now compute the Todd class of each of these summands divided by the Euler class, include the contribution of the Chern character of the quantum line bundle restricted to the component, and sum over the components of the fixed point locus. Let \(x_a, a = 1, \cdots, N_c\) be standard coordinates on the Lie algebra of \(T\).  From \autoref{sec:avm}, we know that the first and second summands integrated over \(F\) give a contribution of
\[
\prod_a \left(N_f  \frac{e^{-x_a}}{1-e^{-x_a}} + \lambda  \right)^g \frac{1}{(1-e^{-x_a} )^{N_f(k_a-g+1) } } 
\]
to the residue. The second summand gives a contribution of
\[
\prod_{a\neq b}\left(1-e^{-(x_a - x_b)}\right)^{k_a - k_b - g +1} \text{.}
\]
The differences \(x_a - x_b\) are the roots of \(\mathfrak{u}(N_c)\).

The restriction of the equivariant Chern character of the quantum bundle to the fixed point locus is
\[
\prod_a e^{\lambda d_a x_a}
\]
where \(d_a = \mc{A} - k_a\) is the action  \( - \int_\Sigma \left( F(A)_a - e^2(\tau - \mu(\phi))_a\omega_\Sigma\right)\)  when \(\phi = 0\).

Putting these pieces together and summing over the partitions of \(k\), which correspond to the fixed point loci of the \(T\)-action, we come to one of our main results:
\begin{equation}
\label{eq:residue1}
\begin{aligned}
\frac{(-1)^\sigma}{N_c!}
 \chi(\mc{L}^\lambda) = \sum_{\sum k_a = k} & \text{Res}_{\bsym{x}=0} \Bigg[ \left(\prod_{a<b} e^{x_a} e^{x_b}(e^{-x_a} - e^{-x_b})^2 \right)^{1-g} \\
 	& \times \prod_a (N_f e^{-x_a} + \lambda (1-e^{-x_a}))^g \frac{e^{\lambda d_a x_a}}{(1-e^{-x_a})^{\delta_a + 1} } \dd x_a \Bigg]
 \end{aligned}
\end{equation}
where \(d_a = \mc{A} - k_a\) and \(\delta_a = N_f ( k_a - g +1) + g-1\), the sum is over partitions of \(k\) of length \(N_c\), and
where \(\sigma = (N+1)(k+1)\). We comment on precisely how we mean to take this residue below.

Introducing \(y_a = 1-e^{-x_a}\), the residue \eqref{eq:residue1} can alternatively be written as
\begin{equation}
\label{eq:residue2}
\begin{aligned}
\frac{(-1)^\sigma}{N_c!} \sum_{\sum k_a = k} \text{Res}_{\bsym{y}=0} \bigg(& \prod_{a<b} (y_a - y_b)^{2-2g}  \\
&\times \prod_a \left(N_f (1-y_a) + \lambda y_a\right)^g \frac{(1-y_a)^{-\lambda d_a - (1-g)(N_c-1) -1}}{y_a^{\delta_a +1}} \dd y_a\bigg)\text{.}
\end{aligned}
\end{equation}
This is often a more convenient form. 

The general structure of the quantity in the residue in \eqref{eq:residue1} is
\[
 \underbrace{ \prod_{a>b} V(x_a,x_b)^{\chi(\Sigma)} }_{\substack{\text{integration over off-diagonal} \\ \text{components of gauge group}}} \underbrace{\sum_{\sum k_a = k} \prod_a \underbrace{H(x_a)^g}_{\substack{\text{contribution from}\\ \text{cycles in Jacobian}}} Z(x_a,k_a) \dd x_a }_{\text{Abelian contribution}} \text{,}
\]
where \(V(x_a,x_b) = \left(e^{-x_a} - e^{-x_b} \right)\) comes from the off-diagonal part of the gauge group, \(H(x_a) = (N_f e^{-x_a} + \lambda (1-e^{-x_a})\) comes from integrating over the cycles in the Jacobian (as we saw in \autoref{sec:avm}), and \(Z(x_a,k_a) = e^{(\lambda d_a +(1-g)(N_c-1) +1) x_a}(1-e^{-x_a})^{-(\delta_a +1)}\). This is of the same structure as \cite[Eq. 4.21]{miyakeVMS}, which gives the analogous residue formula for the volume of the vortex moduli space (in our language, the volume is \(\int_\mc{M} \text{ch}(\mc{L})\)), although the precise details differ.

%%%%%%%%%%%%%%%%%%%%%%%%%%%%%%%%%%%%%%%%%%%%%%%%%%%%%%%%%%%%%%%%%%%%%%%%%%%%%%%%%%%%%%%%%%%%%%%%%%%%%%%%%%%%%%%%%
\subsection{The Vandermonde determinant}

The evaluation of the residues \eqref{eq:residue2} requires us to understand the meaning of the Vandermonde-type contribution
\begin{equation}
\label{eq:vandermonde}
\prod_{a<b} (y_a - y_b)^{2-2g} \text{.}
\end{equation}
The meaning is clear for positive powers (that is, for \(g=0\)), but needs thought otherwise, as the residue requires us to understand how it behaves when all the \(y_a \to 0\). 

We argue that the correct series expansion of
\[
(y_a - y_b)^{-2n}
\]
for \(n \in \mathbb{Z}_{\geq 0}\) for our purposes is the symmetric one:
\begin{align*}
(y_a - y_b)^{-2n} &= (-1)^{-n} (y_a - y_b)^{-n}(y_b - y_a)^{-n} \\
	&= (-y_ay_b)^{-n} \left(1 - \frac{y_a}{y_b}\right) \left(1 - \frac{y_a}{y_b}\right) \\
	&= (-y_a y_b)^{-n} \left( 2 - \left( \frac{y_a}{y_b} + \frac{y_b}{y_a} \right) \right)^{-n} \\
	&= (-2y_a y_b)^{-n} \sum_{i=0}^\infty (-1)^i{-n \choose i} \left( \frac{1}{2} \left( \frac{y_a}{y_b} + \frac{y_b}{y_a} \right) \right)^i \text{.}
\end{align*}
If one views this as a power series in the real numbers, it never converges, which is a cost of insisting on symmetry. 

In general, we say that
\begin{align}
(y_a - y_b)^{2 - 2g} &= (y_a - y_b)^2 (y_a - y_b)^{-2g}  \nonumber \\
	&=  (y_a - y_b)^2 (-2y_a y_b)^{-n} \sum_{i=0}^\infty (-1)^i{-n \choose i} \left( \frac{1}{2} \left( \frac{y_a}{y_b} + \frac{y_b}{y_a} \right) \right)^i \text{.} \label{eq:vandermondeseries}
\end{align}

%%%%%%%%%%%%%%%%%%%%%%%%%%%%%%%%%%%%%%%%%%%%%%%%%%%%%%%%%%%%%%%%%%%%%%%%%%%%%%%%%%%%%%%%%%%%%%%%%%%%%%%%%%%%%%%%%
\subsection{The general result}

We can evaluate the general residue \eqref{eq:residue2} using the expansion of the Vandermonde determinant given above:
\[
\prod_{a<b} (y_a - y_b)^{2-2g} = 2^{(N_c-1)g} \sum_{\sum l_a = N_c(N_c-1)} \sum_{c_1, \cdots, c_{N_c} = -\infty}^\infty a_{l_1, \cdots, l_{N_c}} A_{c_1, \cdots, c_{N_c}} \prod_a y_a^{l_a +c_a - g(N_c-1)} 
\]
where \(a_{l_1, \cdots, l_{N_c}}\) is the coefficient of \(\prod_a y_a^{l_a}\) in \(\prod_{a<b} y_a ^{l_a}\) and \(A_{c_1, \cdots, c_{N_c}}\) is the coefficient of \(\prod_a y_a ^{c_a}\) in \(\prod_{a<b}\left(1-1/2(y_a y_b^{-1} + y_b y_a^{-1})\right)^{-g}\).

Evaluating the residue, we find that
\begin{equation}
\label{eq:generalvortexcount}
\begin{aligned}
\chi&(\mc{L}^\lambda) = \frac{(-1)^\sigma 2^{(N_c-1)g}}{N_c!} \sum_{\sum k_a = k} \sum_{\sum l_a = N_c(N_c - 1)} \sum_{c_1, \cdots ,c_{N_c}}  a_{l_1, \cdots, l_{N_c}} A_{c_1, \cdots, c_{N_c}} \\
& \times \prod_a \left( \sum_{j=0}^g {g \choose j} \lambda^j N_f^{g-j} {\lambda (\mc{A} - k_a) + (N_c-1) + N_fk_a +(N_f-1)(1-g) -l_a -c_a - g \choose N_f k_a +(N_f-1)(1-g) - l_a - c_a + g(N_c - 1) -j} \right) \text{,}
\end{aligned}
\end{equation}
which is the general result.

As written, this is not too meaningful (or, at least, not too useful). In what follows, we will attempt to simplify \eqref{eq:generalvortexcount} in special cases.

%%%%%%%%%%%%%%%%%%%%%%%%%%%%%%%%%%%%%%%%%%%%%%%%%%%%%%%%%%%%%%%%%%%%%%%%%%%%%%%%%%%%%%%%%%%%%%%%%%%%%%%%%%%%%%%%%
%%%%%%%%%%%%%%%%%%%%%%%%%%%%%%%%%%%%%%%%%%%%%%%%%%%%%%%%%%%%%%%%%%%%%%%%%%%%%%%%%%%%%%%%%%%%%%%%%%%%%%%%%%%%%%%%%
\section{Simplifications of the index for local vortices}
\label{sec:simp}

%%%%%%%%%%%%%%%%%%%%%%%%%%%%%%%%%%%%%%%%%%%%%%%%%%%%%%%%%%%%%%%%%%%%%%%%%%%%%%%%%%%%%%%%%%%%%%%%%%%%%%%%%%%%%%%%%
\subsection{NonAbelian local vortices on the sphere}

%%%%%%%%%%%%%%%%%%%%%%%%%%%%%%%%%%%%%%%%%%
\subsubsection{Warm-up: The case of \(N=2\)}

In this section, we will simplify the form of the index \eqref{eq:generalvortexcount} in the case that \(N_c = N_f \coloneqq N\). We will find ourselves able to simplify the index dramatically when we take \(\lambda = N\).

To warm up, let us consider the case of \(N=2\) and \(g=0\). Results in the general case will follow from similar methods, but it is instructive to begin in this restricted situation as everything is more-or-less directly computationally tractable. 

The relevant residue is
\[
\chi(\mc{L}_{k;2,2}^\lambda) = \frac{(-1)^{k+1}}{2} \sum_{k_1 + k_2 = k} \text{Res}_{x_1 = x_2 =0} (e^{-x_1} - e^{-x_2})^2\frac{e^{( \lambda d_1 +1 ) x_1} e^{(\lambda d_2 + 1) x_2}}{(1-e^{-x_1})^{\delta_1+1} (1-e^{-x_2})^{\delta_2+1}} \dd x_1 \dd x_2 \text{,}
\]
where \(d_a = \mc{A} - k_a\) and \(\delta_a = 2k_a +1\). 

This can be evaluated: it is
\begin{align*}
\frac{(-1)^{k+1}}{2} \sum_{k_1 + k_2 = k} \bigg[ &{\lambda d_1 + \delta_1 -1 \choose \delta_1 -2} {\lambda d_2 + \delta_2 +1 \choose \delta_2 } \\
&-2 {\lambda d_1 + \delta_1  \choose \delta_1 -1} {\lambda d_2 + \delta_2 \choose \delta_2 -1} + {\lambda d_1 + \delta_1 +1\choose \delta_1 } {\lambda d_2 + \delta_2 -1 \choose \delta_2 -2} \bigg]\text{.}
\end{align*}
In general, this appears to be rather complicated.

There is a drastic simplification if \(\lambda =2\), so that \(\lambda = N\), as the \(k_a\) dependence of the binomial coefficients simplifies. In this case, we have
\[
\chi(\mc{L}_{k;2,2}^2) = (-1)^{k+1} \sum_{i=0}^k \left[ {2\mc{A} \choose 2i-1} { 2\mc{A} + 2 \choose 2(k-i)+1} - {2\mc{A}+1 \choose 2i}{2\mc{A} + 1 \choose 2(k-i)}\right]\text{.}
\]
This can be (rather strikingly) simplified further to
\[
\chi(\mc{L}_{k;2,2}^2) \stackrel{!}{=} { 2 \mc{A} \choose k}
\] 
as we show with the following lemmas.

\begin{lemma} 
The identity
\[
\sum_{i=0}^{k} \left[{n \choose 2i-1}{n+2 \choose 2(k-i)+1} - {n+1 \choose 2i}{n+1 \choose 2(k-i)} \right] = \sum_{j=0}^{2k}(-1)^{j+1} {n \choose j}{n \choose 2k-j}
\]	
holds.
\end{lemma}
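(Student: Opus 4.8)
The plan is to prove the identity by the method of generating functions, recognising each side as the coefficient of \(x^{2k}\) in a product of generating functions built from the even and odd parts of binomial expansions. Write \(p = 1+x\) and \(q = 1-x\), and for each \(m\) set
\[
E_m(x) = \frac{p^m + q^m}{2} = \sum_{i\geq 0}\binom{m}{2i}x^{2i}, \qquad O_m(x) = \frac{p^m - q^m}{2} = \sum_{i\geq 0}\binom{m}{2i+1}x^{2i+1}.
\]
With the convention \(\binom{m}{-1}=0\) one may rewrite \(O_m(x) = \sum_{i\geq 0}\binom{m}{2i-1}x^{2i-1}\), and a direct Cauchy-product computation then identifies the left-hand side of the lemma: the first summand \(\sum_i \binom{n}{2i-1}\binom{n+2}{2(k-i)+1}\) is exactly \([x^{2k}]\,O_n(x)O_{n+2}(x)\), while \(\sum_i \binom{n+1}{2i}\binom{n+1}{2(k-i)}\) is \([x^{2k}]\,E_{n+1}(x)^2\). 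Hence the whole left-hand side equals \([x^{2k}]\bigl(O_n O_{n+2} - E_{n+1}^2\bigr)\).

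The key algebraic step is to show that this difference of generating functions collapses to something elementary. Expanding in terms of \(p\) and \(q\),
\[
O_n O_{n+2} - E_{n+1}^2 = \tfrac14\bigl(p^n - q^n\bigr)\bigl(p^{n+2}-q^{n+2}\bigr) - \tfrac14\bigl(p^{n+1}+q^{n+1}\bigr)^2,
\]
and after cancelling the \(p^{2n+2}\) and \(q^{2n+2}\) terms this becomes \(-\tfrac14 (pq)^n\bigl(p^2 + 2pq + q^2\bigr) = -\tfrac14 (pq)^n (p+q)^2\). Since \(pq = 1-x^2\) and \(p+q = 2\), the right-hand factor is just \(4\), so \(O_n O_{n+2} - E_{n+1}^2 = -(1-x^2)^n\). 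Extracting the coefficient of \(x^{2k}\) gives \(-(-1)^k\binom{n}{k} = (-1)^{k+1}\binom{n}{k}\) for the left-hand side.

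For the right-hand side I would observe that \(\sum_{j}(-1)^j\binom{n}{j}\binom{n}{2k-j}\) is precisely \([x^{2k}]\,(1+x)^n(1-x)^n = [x^{2k}](1-x^2)^n = (-1)^k\binom{n}{k}\), so after the overall sign the right-hand side also equals \((-1)^{k+1}\binom{n}{k}\). Thus both sides agree, and in fact one obtains the sharper statement that each side independently evaluates to \((-1)^{k+1}\binom{n}{k}\) --- which is what is ultimately needed to recover \(\chi(\mc{L}^2_{k;2,2}) = \binom{2\mc{A}}{k}\) once the prefactor \((-1)^{k+1}\) is restored.

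I do not expect a serious obstacle here; the only real subtlety is bookkeeping. One must handle the index shifts carefully so that the shifted parameters \(n\), \(n+1\), \(n+2\) line up correctly in the Cauchy products, and check that the convention \(\binom{n}{-1}=0\) legitimately lets the \(i=0\) term be absorbed into \(O_n\). The genuine content of the lemma is the factorisation \(O_n O_{n+2} - E_{n+1}^2 = -(1-x^2)^n\); once that telescoping of the \(p,q\) monomials is spotted, the rest is routine coefficient extraction.
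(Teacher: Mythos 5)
Your proof is correct and follows essentially the same route as the paper: both recognise each sum on the left as the coefficient of \(x^{2k}\) in a product of the even/odd parts of \((1\pm x)^m\) and reduce the difference of generating functions to \(-(1+x)^n(1-x)^n\). Your bookkeeping with \(E_m, O_m\) is clean, the factorisation \(O_nO_{n+2}-E_{n+1}^2=-\tfrac14(pq)^n(p+q)^2\) is exactly the cancellation the paper performs, and your closed form \((-1)^{k+1}\binom{n}{k}\) additionally absorbs the content of the paper's subsequent lemma.
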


\begin{proof}
It is possible to derive this result by a liberal application of Pascal's identity for binomial coefficients and a certain rearrangement of the sum, but a much neater proof comes from comparing generating functions on both sides. The sum on the left hand side is the coefficient of \(x^{2k}\) in 
\[
\frac{1}{4} \left[-\left( (1+x)^{n+1} + (1-x)^{n+1} \right)^2 + \left(x^{-1}(1-x)^n - x^{-1}(1+x)^n\right) \left( x(1-x)^{n+2} - x(1-x)^{n+2}\right) \right]\text{,}
\]
where we have used that \(\frac{1}{2} \left( (1+x)^p +(1-x)^p\right) \) isolates the even-power terms of \((1+x)^p\), as well as the analogous result for the odd-power terms.

Expanding this out gives
\[
-\frac{1}{4} ( (1-x) + (1+x))^2(1-x)^n(1+x)^n = -(1-x)^n(1+x)^n \text{.}
\]
The coefficient of \(x^{2k}\) in this final polynomial is
\[
-\sum_{j=0}^{2k} (-1)^j {n \choose j}{n \choose 2k - j}
\]
which gives the claimed result.
\end{proof}

\begin{lemma} 
The identity
\[
\sum_{j=0}^{2k}(-1)^j {n \choose j}{n \choose 2k-j} = (-1)^k{n \choose k}
\]
holds.
\end{lemma}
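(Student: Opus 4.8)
The plan is to prove this by comparing generating functions, exactly in the spirit of the preceding lemma. The left-hand side is a signed convolution of binomial coefficients, which immediately suggests reading it off as a coefficient in a product of two binomial series.

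First I would observe that \(\sum_{j=0}^{2k}(-1)^j {n \choose j}{n \choose 2k-j}\) is the coefficient of \(x^{2k}\) in the product \((1-x)^n (1+x)^n\). Indeed, writing \((1+x)^n = \sum_i {n\choose i} x^i\) and \((1-x)^n = \sum_m (-1)^m {n \choose m} x^m\), the coefficient of \(x^{2k}\) in their product is \(\sum_{j=0}^{2k} (-1)^{2k-j}{n\choose j}{n \choose 2k-j}\); since \((-1)^{2k-j} = (-1)^j\), this is precisely the left-hand side.

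Next I would use the elementary factorisation \((1-x)^n(1+x)^n = (1-x^2)^n\). Expanding the right-hand side as \((1-x^2)^n = \sum_{l=0}^n (-1)^l {n \choose l} x^{2l}\), the coefficient of \(x^{2k}\) is read off at \(l=k\), giving \((-1)^k {n \choose k}\). Equating the two expressions for the coefficient of \(x^{2k}\) yields the claim.

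Honestly, the main obstacle here is minimal: the identity is a one-line consequence of \((1-x^2)^n = (1-x)^n(1+x)^n\) once the convolution is recognised. The only points requiring slight care are the sign bookkeeping (the identity \((-1)^{2k-j}=(-1)^j\)) and the summation range, both of which are routine. In fact this lemma is strictly simpler than the previous one, whose generating function already contained the factor \((1-x)^n(1+x)^n\) in disguise; the present statement can be viewed as extracting the clean combinatorial core of that earlier computation, which is presumably why the authors split it off as a separate step.
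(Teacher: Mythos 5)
Your proposal is correct and is exactly the paper's argument: the paper proves this lemma by comparing the coefficients of \(x^{2k}\) on both sides of \((1-x)^n(1+x)^n = (1-x^2)^n\), which is precisely your computation, sign bookkeeping included.
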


\begin{proof}
Comparison of the coefficients of \(x^{2k}\) on both sides of the identity
\[
(1-x)^n(1+x)^n = (1-x^2)^n 
\]	
gives the result.
\end{proof}

These lemmas lead us to the result that
\[
\chi(\mc{L}_{k; 2,2}^2) = {2 \mc{A} \choose k} \text{.}
\]

%%%%%%%%%%%%%%%%%%%%%%%%%%%%%%%%%%%%%%%%%%
\subsubsection{General \(N\) on the sphere}

When \(g=0\) but \(N\) is allowed to be general, recall that the relevant residue can be evaluated to give
\[
\chi(\mc{L}^\lambda_{k;N,N}) = \frac{(-1)^\sigma}{N!} \sum_{\sum k_a = k} \sum_{\sum l_a = N(N-1)} a_{l_1 \cdots l_{N}}\prod_a {\lambda d_a + \delta_a + N - 1 - l_a \choose \delta_a - l_a}
\]
where \(a_{l_1 \cdots l_{N}}\) is the coefficient of \(y_1^{l_1} \cdots y_N^{l_{N}} \) in the polynomial \(\prod_{a<b} (y_a - y_b)^2\).

This is, again, a complicated sum. Once more, it simplifies if we take \(\lambda = N\). In this case, it becomes
\[
\chi(\mc{L}^N_{k;N, N}) = \frac{(-1)^\sigma}{N!} \sum_{\sum k_a = k} \sum_{\sum l_a = N(N-1)} a_{l_1 \cdots l_N}\prod_a {N\mc{A} + 2(N - 1) - l_a \choose Nk_a + N - 1 - l_a} \text{.}
\]

To understand this sum, we will consider its generating function. The basic result that underlies our computation is the following elementary lemma, which tells us how to pick out the relevant binomial coefficients in our expressions.

\begin{lemma}
\label{lemma:gen1}
For any \(j\), one has
\[
\sum_{i=0}^\infty	{n \choose Ni + j} x^{Ni } =  \frac{x^{-j}}{N}\sum_{a=1}^N r^{-aj}(1+r^ax)^n \text{,}
\]
where \(r\) generates the group of \(N^\text{th}\) roots of unity.
\end{lemma}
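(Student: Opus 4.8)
The plan is to prove the identity by a \emph{roots-of-unity filter} (series multisection), working from the right-hand side back to the left. Since \(r\) is a primitive \(N\)th root of unity, the powers \(r^a\) for \(a = 1, \dots, N\) run over all \(N\)th roots of unity exactly once, and the only nontrivial ingredient is the orthogonality relation
\[
\sum_{a=1}^N r^{a m} = \begin{cases} N & \text{if } N \mid m, \\ 0 & \text{otherwise,} \end{cases}
\]
which follows from summing the finite geometric series and using \(r^N = 1\).

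First I would expand each term on the right using the binomial theorem, \((1 + r^a x)^n = \sum_{m=0}^n \binom{n}{m} r^{am} x^m\), and interchange the (finite) sum over \(m\) with the sum over \(a\). This gives
\[
\frac{x^{-j}}{N} \sum_{a=1}^N r^{-aj} (1 + r^a x)^n = \frac{1}{N} \sum_{m=0}^n \binom{n}{m} x^{m-j} \sum_{a=1}^N r^{a(m-j)} \text{.}
\]
Applying the orthogonality relation to the inner sum annihilates every term except those with \(m \equiv j \pmod N\), each surviving term contributing a factor of \(N\) that cancels the \(1/N\) prefactor.

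Next I would reindex the surviving terms by writing \(m = Ni + j\). Assuming without loss of generality that \(0 \le j < N\) (and using the convention \(\binom{n}{m} = 0\) for \(m < 0\) or \(m > n\), which also renders the sum over \(i\) effectively finite), the surviving monomials are \(x^{m-j} = x^{Ni}\) with \(i = 0, 1, 2, \dots\), so the right-hand side collapses exactly to \(\sum_{i=0}^\infty \binom{n}{Ni+j} x^{Ni}\), which is the left-hand side.

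I do not expect any serious obstacle: this is a standard multisection identity and the computation is short. The only points requiring a little care are (i) checking that the index shift carried jointly by the explicit factor \(x^{-j}\) and the phase \(r^{-aj}\) isolates precisely the residue class \(j \bmod N\) and not some other class, and (ii) being explicit about the admissible range of \(j\) and the vanishing convention for binomial coefficients, so that the reindexed sum begins cleanly at \(i = 0\). Since everything may be read as an identity of formal power series in \(x\), no convergence questions arise.
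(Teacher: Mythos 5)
Your proposal is correct and takes essentially the same route as the paper's own proof: both expand \((1+r^ax)^n\) by the binomial theorem, interchange the order of summation, and use the root-of-unity orthogonality relation (evaluated in the paper via the finite geometric series) to isolate the residue class \(j \bmod N\). Your remarks on the vanishing convention for binomial coefficients and the admissible range of \(j\) are sensible housekeeping that the paper leaves implicit.
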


\begin{proof}
On the right hand side, we have
\begin{align*}
\frac{x^{j}}{N}\sum_{a=1}^N r^{-aj}(1+r^ax)^n &= \frac{1}{N} \sum_{a=1}^{N} \sum_{l = 0}^\infty {n \choose l} r^{a(l-j)}x^{l-j} \\
	&= \frac{1}{N} \sum_{l=0}^{\infty} {n \choose l} x^{l-j} \sum_{a=1}^N r^{a(l-j)}
\end{align*}
Now, by virtue of the properties of roots of unity, the sum
\[
\sum_{a=1}^N r^{a(l-j)} = r^{l-j}\frac{1-r^{N(l-j)}}{1-r^{l-j}}
\]	
vanishes unless \(l = Ni + j\) for some \(i\), in which case it is \(N\). Our sum thus becomes
\[
\sum_{i=0}^N {n \choose Ni + j} x^{Ni}
\]
as claimed.
\end{proof}

Using this lemma and writing \(n = N\mc{A}\), the Euler characteristic \(\chi(\mc{L}_{k;N,N}^N)\) is, up to the overall factor \((-1)^\sigma N!^{-1}N^{-N}\), the coefficient of \(x^{Nk}\) of the polynomial
\[
P(x) = \sum_{\sum_a l_a = N(N-1)} a_{l_1 \cdots l_N} \prod_{a=1}^n x^{-(N-1)+l_a} \left(\sum_{b=1}^N r^{-(1-l_a) b}(1+r^b x)^{n + 2(N-1)-l_a} \right)
\]
where \(r\) generates the group of \(N^\text{th}\) roots of unity. Because \(\sum l_a = N(N-1)\), we see that \(\prod_a x^{-(N-1) + l_a} = 1\). We then have
\[
P(x) = \sum_{\sum_a l_a = N(N-1)} a_{l_1 \cdots l_N} \prod_a \left(\sum_b r^{-(1-l_a) b}(1+r^b x)^{n + 2(N-1)-l_a} \right) \text{.}
\]

Expanding the product gives
\begin{align*}
P(x) &= \sum_{\sum l_a = N(N-1)} a_{l_1 \cdots l_N} \sum_{b_1 , \cdots, b_N = 1}^N \prod_{a=1}^N r^{-(1-l_a) b_a} (1+r^{b_a} x)^{n+2(N-1) - l_a} \\
	&= \sum_{b_1, \cdots, b_N = 1}^N \sum_{\sum l_a = N(N-1)} a_{l_1 \cdots l_N} \prod_{a=1}^N r^{-(1-l_a )b_a} (1+r^{b_a}x)^{n+2(N-1)-l_a} \text{.}
\end{align*}
The only non-zero contributions to the sum come from those terms where all the \(b_a\) are different. To see this, sum over the terms where \(b_j = \cdots = b_N \eqqcolon b\) for some \(N > j \geq 1\). Write \(a' = 1, \cdots, j-1\) and \(\tilde{a} = j , \cdots, N\). The sum over the restricted configurations with \(b_{\tilde{a}} = b\) is
\[
\sum_{\sum {l_a} = N(N-1)} a_{l_1 \cdots l_N} \sum_{b=1}^N \sum_{b_{a'} = 1}^N\prod_{\tilde{a}=j}^N r^{-(1-l_{\tilde{a}})b}(1+r^bx)^{n + 2(N-1)- l_{\tilde{a}}} \prod_{a'=1}^{j-1}r^{-(1-l_{a'})b_{a'}}(1+r^{b_{a'}}x)^{n + 2(N-1) - l_{a'}}\text{.}
\]
We can now do the product over \(\tilde{a}\), giving
\begin{align*}
\sum_{\sum l_a = N(N-1)} a_{l_1 \cdots l_N} & \sum_b \left( r^{\sum_{\tilde{a}} l_{\tilde{a}} b} (1 +r^b x)^{n + 2(N-1) - \sum_{\tilde{a}} l_{\tilde{a}}} \right) \\
	& \times \sum_{b_{a'}} \prod_{a'=1}^{j-1}r^{-(1-l_{a'})b_{a'}}(1+r^{b_{a'}}x)^{n + 2(N-1) - l_{a'}} \text{.}
\end{align*}

Write \( \tilde{L} = \sum_{\tilde{a}} l_{\tilde{a}} \). Then this becomes
\begin{align*}
\sum_{\tilde{L} = 0}^{N(N-1)} \sum_{\sum l_{\tilde{a}} = \tilde{L}} \sum_{\sum l_{a'} = N(N-1) - \tilde{L}} &a_{l_1, \cdots, l_N} \sum_b r^{\tilde{L}b} (1+r^b x)^{n+2(N-1) - \tilde{L}}\\
	& \times \sum_{b_{a'}} \prod_{a'=1}^{j-1}r^{-(1-l_{a'})b_{a'}}(1+r^{b_{a'}}x)^{n + 2(N-1) - l_{a'}} 
\end{align*}
which vanishes because, for each value of \(\tilde{L}\), the sum
\[
\sum_{\sum_j^N l_{\tilde{a}} = \tilde{L}} a_{l_1 \cdots l_{j-1} l_{j} \cdots l_N} = 0\text{,}
\]
as long as \(N-j > 1\). This follows from the definition of the coefficient \(a_{l_1 \cdots l_N}\) as the coefficient of \(\prod_a y_a^{l_a}\) in \(\prod_{a<b} (y_a - y_b)^2\). 

Returning now to our generating function \(P(x)\), restricting to all of the \(b_a\) different allows us to write
\begin{align*}
\frac{P(x)}{N!} &= \sum_{\sum l_a = N(N-1)} a_{l_1 \cdots l_N}\prod_a r^{-(1-l_a)a} (1+r^a x)^{n + 2(N-1) - l_a} \\
	&= \pm (1\pm x^N)^{n + 2(N-1)} \prod_a (r^{-a} (1+r^ax) )^{-l_a} \text{,}
\end{align*}
where we have used that \(\prod_a (1+r^a x) =  (1\pm x^N)\) where the sign is negative for \(N\) even and positive for \(N\) odd.  We can now use the definition of the \(a_{l_1\cdots l_N}\) to repackage the sum as follows 
\begin{align*}
\frac{P(x)}{N!} &= \pm  (1 \pm x^N)^{n+2(N-1)} \prod_{a<b} \left( \frac{r^a}{(1+r^ax)} - \frac{r^b}{(1+r^bx)} \right)^2 \\
&=\pm (1\pm x^N)^n \prod_{a<b}(r_a - r_b)^2 \\
&= \pm N^N (1 \pm x^N)^{n} \text{.}
\end{align*}

By taking the coefficient of \(x^{Nk}\) and reinserting the relevant factors, we see that the Euler characteristic is
\[
\chi(\mc{L}^N_{k; N, N}) = {N \mc{A} \choose k}
\]
when \(g = 0\).

%%%%%%%%%%%%%%%%%%%%%%%%%%%%%%%%%%%%%%%%%%%%%%%%%%%%%%%%%%%%%%%%%%%%%%%%%%%%%%%%%%%%%%%%%%%%%%%%%%%%%%%%%%%%%%%%%
\subsection{NonAbelian local vortices on general compact surfaces}

%%%%%%%%%%%%%%%%%%%%%%%%%%%%%%%%%%%%%%%%%%
\subsubsection{Warming up on general surfaces: The case of \(N=2\)}

To see through the eventual forest of sums and products that arises in general, when \(g\) is general it is again useful to begin in the case of \(N =2\), which is the first nontrivial case. 

The residue that we need to compute is
\[
\frac{(-1)^\sigma 2^{2g}}{2!}  \sum_{k_1+k_2 = k}\text{Res}_{y_1 = y_2 = 0} \left[ (y_1 -y_2)^{2-2g} \frac{(1-y_1)^{-2d_1 - (1-g) - 1}(1-y_2)^{-2d_2-(1-g) - 1}}{y_1^{\delta_1+1}y_2^{\delta_2+1}} \dd y_1 \dd y_2 \right] \text{.}
\]

Our prescription for evaluating this residue is based on the series expansion \eqref{eq:vandermondeseries} for the Vandermonde-type contribution. Plugging this back into the residue and interchanging the sum and the integral gives
\begin{align*}
(-1)^\sigma 2^{g-1}  \sum_{k_1+k_2 = k} &\sum_{i=0}^\infty {-g \choose i} \left(-\frac{1}{2}\right)^i \sum_{j=0}^i {i \choose j} \\
&\times\text{Res}\left[\left(\frac{y_1}{y_2} - 2 + \frac{y_2}{y_1}\right)\frac{(1-y_1)^{-2d_1 - (1-g) - 1} (1-y_2)^{-2d_2- (1-g) - 1}}{y_1^{\delta_1 + (g-1) - 2j + i +1}y_2^{\delta_2 + (g-1)  -i +  2j + 1}} \dd y_1 \dd y_2\right] \text{,}
\end{align*}
where \(d_a = \mc{A} - k_a\) and \(\delta_a = N_f k_a + (N_f-1) (1-g) \).

The residues can be evaluated and the terms grouped, giving
\begin{align*}
(-1)^\sigma 2^{g}  \sum_{k_1+k_2 = k} \sum_{i=0}^\infty {-g \choose i} \left(-\frac{1}{2}\right)^i \sum_{j=0}^i& {i \choose j} \\
\times\bigg[ { 2 \mc{A} + (1-g) -1 - 2j +i \choose 2k_1  -1 -2j +i}&{2\mc{A} + (1-g) + 1 + 2j - i \choose 2k_2  + 1 + 2j - i} - \\
&{ 2 \mc{A} + (1-g)  - 2j +i \choose 2k_1  -2j +i}{2\mc{A} + (1-g)  + 2j - i \choose 2k_2  + 2j - i}  \bigg]\text{.}
\end{align*}
Writing \(S(g,k, 2\mc{A}) \) for this quantity, we will show that the identity
	\[
	S(g,k, n) = {n \choose k}
	\]
holds. 

Introducing the space-saving notation \(n = 2\mc{A}\) and \(l = 2j-i\), the quantity \(S(g,k, n)\) is the coefficient of \(x^{2k}\) of the polynomial
\begin{align*}
(-1)^\sigma 2^{g-2}  \sum_{i=0}^\infty &{-g \choose i} \left(-\frac{1}{2}\right)^i \sum_{j=0}^i {i \choose j} \\
	 \bigg[ &\left( (1+x)^{n +(1-g) -1 - l } + (-1)^{-g-l}(1-x) ^{n +(1-g) -1 - l } \right)\\
	&\times \left((1+x)^{n +(1-g) +1 + l } + (-1)^{-g-l}(1-x) ^{n +(1-g) +1 + l } \right) \\
	&-  \left((1+x)^{n +(1-g) - l } + (-1)^{1-g-l}(1-x) ^{n +(1-g) -l } \right) \\
	&\times \left((1+x)^{n +(1-g) + l } + (-1)^{1-g-l}(1-x) ^{n +(1-g) + l } \right) \bigg] \text{,}
\end{align*}
which simplifies to give
\begin{align*}
(-1)^{\sigma-g} 2^{g-1} \sum_{i=0}^\infty {-g \choose i} &\left(\frac{1}{2}\right)^i \sum_{j=0}^i {i \choose j} (1-x^2)^{n+(1-g)-1-l}\left( (1+x)^{2l+1} + (1-x)^{2l+1}\right) \\
&= (-1)^{\sigma-g} 2^{g} (1-x^2)^{n-g} \sum_{i=0}^\infty {-g \choose i} (-1)^i \left( \frac{x^2 + 1}{x^2-1} \right)^i \\
&= (-1)^{\sigma - g} 2^{g} (1-x^2)^{n-g} \left(\frac{1}{1- \frac{x^2+1}{x^2-1}} \right)^g \\
&= (-1)^k(1-x^2)^n \text{.}
\end{align*}
The coefficient of \(x^{2k}\) in \((1-x^2)^n\) is \( (-1)^k{n \choose k}\), so we have shown that 
\[
S(g,k,n) = {n \choose k}
\]
as claimed.

Modulo technical questions regarding the mathematical validity of the residue prescription we give, we have demonstrated that
\[
\chi(\mc{L}^2_{g;k;2,2}) = {2\mc{A} \choose k}
\]
for all \(g\), \(k\) and \(\mc{A}\). 

%%%%%%%%%%%%%%%%%%%%%%%%%%%%%%%%%%%%%%%%%%
\subsubsection{General \(N\) on general surfaces}

We now turn to the most general case of local vortices in \(U(N)\) gauge theory with \(N\) flavours on arbitrary compact surfaces.

Our results for general \(N_c\) and \(g=0\) and for \(N_c=2\) and general \(g\) lead us to suspect that
\[
\chi(\mc{L}_{g;k;N,N}^N) \stackrel{?}{=} {N \mc{A} \choose k}
\]
for all \(N\) and \(g\). 

We again simplify by setting \(\lambda =N_f\). The residue we need to compute is then
\[
\frac{(-1)^\sigma N_f^{N_c g}}{N_c!} \sum_{\sum k_a = k} \text{Res}_{\bsym{y}=0} \left[ \prod_{a<b} (y_a - y_b)^{2-2g} \prod_a \frac{(1-y_a)^{-N_f d_a - (N_c-1) (1-g)-1}}{y_a^{\delta_a + 1}} \dd y_a\right] \text{,}
\]
where, as always, \(d_a = \mc{A} - k_a\), \(\delta_a = N_f k_a + (N_f-1) (1-g) \), and \(a,b = 1, \cdots N_c\).

Our prescription for computing this residue is based on the expansion of \eqref{eq:vandermondeseries}, which gives
\begin{align*}
\prod_{a<b} (y_a - y_b)^{2-2g} &= \prod_{a<b} (y_a - y_b)^2 \prod_{a<b} (y_a - y_b)^{-2g} \\
	&= 2^{(N_c-1)g}\prod_{a<b} (y_a - y_b)^2 \prod_a y_a^{-g(N_c - 1)} \prod_{a<b} \left(1 - \frac{1}{2}\left(\frac{y_a}{y_b} + \frac{y_b}{y_a}\right)\right)^{-g} \\
	&= 2^{(N_c-1)g}\prod_{a<b} (y_a - y_b)^2 \prod_a y_a^{-g(N_c - 1)} \prod_{a<b} \sum_{i_{a,b} = 0}^\infty {-g \choose i_{a,b}} \left(-\frac{1}{2}\right)^{i_{a,b}} \left(\frac{y_a}{y_b} + \frac{y_b}{y_a} \right)^{i_{a,b}} \\
	&= 2^{(N_c-1)g} \prod_{a<b} (y_a - y_b)^2 \prod_a y_a^{-g(N_c - 1)} \prod_{a<b} \sum_{i_{a,b} = 0}^\infty {-g \choose i_{a,b}} \left(-\frac{1}{2}\right)^{i_{a,b}} \\
	& \quad\quad\quad\quad \times \left( \sum_{j_{a,b}=0}^{i_{a,b}} {i_{a,b} \choose j_{a,b}} y_a^{2j_{a,b} - i_{a,b}} y_b^{-2j_{a,b} + i_{a,b}} \right) \\
	&=2^{(N_c -1)g} \sum_{\sum l_a = N_c(N_c-1)} a_{l_1 
	\cdots l_{N_c}} \prod_a y_a^{l_a - g(N_c-1)} \prod_{a<b} \sum_{i_{a,b} = 0}^\infty {-g \choose i_{a,b}} \left(-\frac{1}{2}\right)^{i_{a,b}} \\
	& \quad\quad\quad\quad \times \left( \sum_{j_{a,b}=0}^{i_{a,b}} {i_{a,b} \choose j_{a,b}} y_a^{2j_{a,b} - i_{a,b}} y_b^{-2j_{a,b} + i_{a,b}} \right) \\
	&=2^{(N_c-1)g}\sum_{\sum l_a = N_c(N_c-1)} a_{l_1, \cdots, l_{N_c}} \sum_{c_1, \cdots, c_{N_c} = -\infty}^\infty  A_{c_1, \cdots ,c_{N_c}} \prod_a y_a^{l_a + c_a -g(N_c-1)}
\end{align*}
where the \(a_{l_1\cdots l_N}\) is the coefficient of \(\prod_a y_a^{l_a}\) in \(\prod_{a<b}(y_a - y_b)^2\) as before and \(A_{c_1, \cdots, c_{N_c}}\) is the coefficient of \(\prod_a y_a^{c_a}\) in \(\prod_{a<b} \sum_{i_{a,b}} {-g \choose i_{i,b}} (-1/2 (y_a y_b^{-1} + y_b y_a^{-1}))^{1/2} \).

The idea is to unwind this expansion, compute the residue, pass to the generating function, and then wind the expansion back up to obtain a neat form. 

Substituting this into the residue gives, up to the overall prefactor \((-1)^\sigma 2^{(N_c-1)g} N_f^{N_c g} N_c!^{-1}\),
\[
\sum_{\sum k_a = k} \sum_{\sum l_a = N_c(N_c-1)} \sum_{c_1, \cdots, c_{N_c}} a_{l_1, \cdots, l_{N_c}} A_{c_1, \cdots, c_{N_c}} \prod_a {N_f d_a + (N_c-1) + \delta_a - l_a - c_a \choose \delta_a - l_a - c_a + g(N_c-1)}
\]
where, as always, \(d_a = \mc{A} - k_a \) and \(\delta_a = N_f k_a + (N_f - 1)(1-g) \). This constitutes the general result for the Euler characteristic \(\chi(\mc{L}_{k; N_c, N_f} ^{N_f})\). 

We now set \(N_c = N_f \eqqcolon N\). In this case, the residue becomes
\begin{align*}
\sum_{\sum k_a = k} &\sum_{\sum l_a = N(N-1)} \sum_{c_1, \cdots, c_{N}} a_{l_1, \cdots, l_{N}} A_{c_1, \cdots, c_{N}} \prod_a {N \mc{A}  + (N-1)(2-g) - l_a - c_a  \choose N k_a + (N-1) - l_a - c_a }\text{.}
\end{align*}
Using the \cref{lemma:gen1} and writing \(n = N \mc{A}\), we find that these numbers are the coefficients of \(x^{Nk}\) of the polynomial
\begin{align*}
P_g(x) =  \sum_{\sum l_a = N(N-1)} &\sum_{c_1, \cdots, c_{N}} a_{l_1, \cdots, l_{N}} A_{c_1, \cdots, c_{N}} \\
&\times \prod_a x^{-(N-1)+ l_a + c_a} \left( \sum_{b=1}^N r^{-(1-l_a -c_a)b} ( 1+ r^b x)^{n + (N-1)(2-g)-l_a-c_a}  \right)
\end{align*}
where \(r\) generates the \(N^\text{th}\) roots of unity, as before, and we have multiplied through by \(N^N\).

The method now is similar to the genus zero case. That \(\sum l_a = N(N-1)\) gives \( \prod_a x^{-(N-1)+l_a} = 1\). We then expand the product to give
\begin{align*}
P_g(x) = \sum_{\sum l_a = N(N-1)} &\sum_{c_1, \cdots, c_{N}} a_{l_1, \cdots, l_{N}} A_{c_1, \cdots, c_{N}} \\
&\times \sum_{b_1, \cdots, b_N = 1}^N \prod_a x^{ c_a }  r^{-(1-l_a -c_a)b_a} ( 1+ r^{b_a} x)^{n + (N-1)(2-g)-l_a-c_a} \text{.}
\end{align*}
For exactly the same reasons as before the only nonzero contributions come when all the \(b_a\) are distinct, at least after regularisation. Thus
\begin{align*}
\frac{P_g(x)}{N!}  &= \sum_{\sum l_a = N(N-1)} \sum_{c_1, \cdots, c_{N}} a_{l_1, \cdots, l_{N}} A_{c_1, \cdots, c_{N}}  \prod_a x^{ c_a }  r^{-(1-l_a -c_a)a} ( 1+ r^a x)^{n + (N-1)(2-g)-l_a-c_a} \\
 &= (-1)^{N+1} (1\pm x^N)^{n+(N-1)(2-g)} \sum_{l_a} \sum_{c_a} a_{l_1, \cdots l_{N}} A_{c_1, \cdots, c_N}  \\
 & \quad\quad\quad\quad\quad\quad\quad\quad\quad\quad\quad \times \prod_a (x^{-1} r^{-a}(1+r^ax))^{-c_a}(r^{-a}(1+r^ax))^{-l_a} \text{.}
\end{align*}

We now wind up the series, using the definition of the coefficients \(a_{l_1,\cdots, l_N}\) and \(A_{c_1, \cdots, c_N}\), which leads to the polynomial
\begin{align*}
\frac{P_g(x)}{N!} = (-1)^{N+1} (1\pm x^N)^{n+(N-1)(2-g)} &\prod_{a<b}\left(\frac{r^a}{(1+r^ax)} - \frac{r^b}{(1+r^bx)} \right)^2 \\
	& \times \prod_{a<b} \left(1 - \frac{1}{2} \left(\frac{r^a(1+r^bx)}{r^b (1+r^a x)} + \frac{r^b(1+r^ax)}{r^a(1+r^bx)} \right)\right)^{-g}\text{.}
\end{align*}
This is
\begin{align*}
\frac{2^{(N-1)g}}{N!} P_g(x) &= \pm (1\pm x^N)^{n+(N-1)(2-2g)} \prod_{a<b}\left(\frac{r^a}{(1+r^ax)} - \frac{r^b}{(1+r^bx)} \right)^{2-2g} \\
	&= \pm N^{N(1-g)}(1\pm x^N)^{n} \text{.}
\end{align*}

Now, \(P_g(x)\) was defined to so that the coefficient of \(x^{Nk}\) in 
\[
(-1)^\sigma N^{N(g-1)}N!^{-1} 2^{(N-1)g} P_g(x)
\]
is the quantity we are after. Collecting all of the various prefactors that we have discarded and comparing coefficients, this gives the final result
\begin{equation}
\label{eq:localcount}
\chi(\mc{L}_{g; k; N,N}^N ) = {N \mc{A} \choose k} \text{,}
\end{equation}
as expected.

%%%%%%%%%%%%%%%%%%%%%%%%%%%%%%%%%%%%%%%%%%%%%%%%%%%%%%%%%%%%%%%%%%%%%%%%%%%%%%%%%%%%%%%%%%%%%%%%%%%%%%%%%%%%%%%%%
\subsection{A heuristic argument for the local vortex count}
\label{subsec:heuristic}

The computation that we have done is interesting in that it is rather involved but has, in the local case (at least), a very simple output. One simple, but heuristic, way to see that the local vortex count might be
\[
{N \mc{A} \choose k}
\] 
follows by considering symmetry arguments based on colour-flavour locking, as used in \cite{hananyVIB,tongTASI} to understand the internal moduli of vortices. Let us summarise the construction. Let \((A^*, \phi^*)\) be a solution to the local \emph{Abelian} vortex equations. Then one can embed this Abelian solution into a solution \((A, \phi)\) to the local nonAbelian vortex equations as
\[
A = \left(
\begin{matrix}
A^* && & \\
&0 && \\
&& \ddots & \\
& && 0	
\end{matrix}
\right) \text{, } \quad
\phi = \left(
\begin{matrix}
\phi^* &&& \\
&\sqrt{\tau} && \\
&& \ddots & \\
&&& \sqrt{\tau}	
\end{matrix}
\right) \text{.}
\]
Of course, there is nothing special about the top left entry: a local nonAbelian vortex can be given by putting a local Abelian vortex in any colour-flavour `slot'. 

One could then build a charge \(k\) nonAbelian vortex by distributing \(k\) Abelian vortices into the \(N\) `slots'. Given that the number of charge \(k_i\) local Abelian vortices is
\[
{\mc{A} \choose k_i} \text{,}
\]
the number of ways to distribute \(k\) Abelian vortices among the \(N\) slots is
\[
\sum_{k_1 + \cdots + k_N = k} {\mc{A} \choose k_1} \cdots {\mc{A} \choose k_N} = {N \mc{A} \choose k} \text{.}
\]
This is in agreement with the result \eqref{eq:localcount}.

Note that, while the result of the localisation computation also came as a sum over partitions of \(k\), the form that the summands take is very different to this and there are cancellations between the contributions from the various partitions.

It is quite remarkable that objects as complicated as vortices might be amenable to such simple symmetry arguments (at least in special cases). It would be interesting to see how to make the above argument rigorous. The results of Baptista in \cite{baptistaNAV} show that aspects of these symmetry-based arguments can be made rigorous in some generality when applied to understanding the geometry of the vortex moduli space.

%%%%%%%%%%%%%%%%%%%%%%%%%%%%%%%%%%%%%%%%%%%%%%%%%%%%%%%%%%%%%%%%%%%%%%%%%%%%%%%%%%%%%%%%%%%%%%%%%%%%%%%%%%%%%%%%%
%%%%%%%%%%%%%%%%%%%%%%%%%%%%%%%%%%%%%%%%%%%%%%%%%%%%%%%%%%%%%%%%%%%%%%%%%%%%%%%%%%%%%%%%%%%%%%%%%%%%%%%%%%%%%%%%%
\section{Local duality}
\label{sec:localduality}

%%%%%%%%%%%%%%%%%%%%%%%%%%%%%%%%%%%%%%%%%%%%%%%%%%%%%%%%%%%%%%%%%%%%%%%%%%%%%%%%%%%%%%%%%%%%%%%%%%%%%%%%%%%%%%%%%
\subsection{What are local vortices?}

In our discussion of duality we will restrict ourselves to two special cases where our results take a particularly simple form: the local case of \(N_f = N_c = \lambda  \) and the Abelian semilocal case of \(N_c =1\), \(N_f = \lambda\). (Recall that working at Chern--Simons level \(\lambda = N_f\) simplifies our formulae significantly, especially on surfaces of genus greater than one.)

We begin here in the local case, where we conjecture that quantum vortices `are' fermions in a background flux, generalising the results of \cite{erikssonKQ}.  

More precisely, we will find that our vortex quantum mechanics seems to agree with the theory of \(N \coloneqq N_f = N_c = \lambda\) flavours of (spin half) fermion coupled to a background gauge field for the flavour symmetry. Matching of these theories requires the matching of choices made in defining the theory. We will see that both theories depend on the choice of a holomorphic vector bundle.

%%%%%%%%%%%%%%%%%%%%%%%%%%%%%%%%%%%%%%%%%%%%%%%%%%%%%%%%%%%%%%%%%%%%%%%%%%%%%%%%%%%%%%%%%%%%%%%%%%%%%%%%%%%%%%%%%
\subsection{The Fermi theory}

The basic Fermi theory dual to the theory of local \(U(N)\) vortices at level \(N\) takes the following simple form, based on the Abelian example of \cite{erikssonKQ}. 

Let \(V \to \Sigma\) be a (flavour) complex vector bundle of rank \(N\) and degree \(c_1(V)[\Sigma] = N\mc{A} \), which should be an integer. The field of the static theory is a section
\[
\psi = (\psi_1, \cdots, \psi_N) \in \Gamma( V \otimes K_\Sigma^{1/2})
\]
where \(K_\Sigma^{1/2}\) is a spin bundle on \(\Sigma\). Each of the \(\psi_i\) is interpreted as an individual flavour. Note, however, that for general \(N\mc{A}\) there is no splitting 
\[
V \stackrel{?}{=} L^{\oplus N}
\]
for a line bundle \(L\), so the individual flavours do not really have an independent existence. Of course, one can make sense of them locally. If \(\mc{A}\) is an integer, so that \(N\mc{A} \in N\mathbb{Z}\), then \(L\) does exist, and one can make sense of the individual flavours.

We give the flavour bundle \(V \otimes K_\Sigma^{1/2}\) a backgound connection \(B\). The total flux of \(B\) through \(\Sigma\) is fixed by the topology of the bundle to be \({N} (\mc{A} - 1 + g) \). The connection \(B\) induces a holomorphic structure on \(V \otimes K_\Sigma^{1/2}\) via the Dolbeault operator \(\db_B \).

We extend the bundle \(V \otimes K_\Sigma^{1/2}\) to \(\Sigma \times S^1\) in the trivial way, pulling it back along the projection to \(\Sigma\). We then think of \(\psi\) as being a function of a periodic time component \(t\). We should also introduce a time component \(B_t\) to the background gauge field \(B\) so as to preserve gauge invariance. However, because the topology of the bundle is trivial along the time direction, we can make the gauge choice \(B_t = 0\). This is sometimes called the temporal gauge. As \(B\) is a background gauge field, this is completely harmless.

In this gauge, the Lagrangian is the Pauli Lagrangian
\[
\psi^\dagger( \ii \partial_t + \Delta_B - *F(B))\psi + \text{c.c.}
\]
where \(\Delta_B\) is the usual \(B\)-coupled Laplacian and the combination \(\Delta_B - *F(B)\) gives the Dolbeault Laplacian \(\Delta_{\db_B} = \db_B^\dagger \db_B\).

Note that if \(\mc{A} \in \mathbb{Z}\) and \(B\) diagonalises, this is the Lagrangian for \(N\) fermion flavours in the background of a (genuine) magnetic field of total flux \(\mc{A}\). 

At low temperature, the theory becomes fermionic Hamiltonian mechanics on the space of zero modes
\[
\db_B \psi = 0 \text{,}
\]
which is the zeroth vector bundle cohomology \(H^0(V \otimes K_\Sigma^{1/2})\).
The Riemann--Roch theorem tells us that the `expected' dimension of this space is
\begin{align*}
h^0(V\otimes K_\Sigma^{1/2}) - h^1(V \otimes K_\Sigma^{1/2}) &=  N(\mc{A} - 1 +g) + N(1-g) \\
	&= N\mc{A} \text{.} 
\end{align*}

When is \(h^1(V \otimes K_\Sigma^{1/2}) = 0\)? Suppose that \(\mc{A} \in \mathbb{Z}\) and that \(V\) splits as \(L^{\oplus N}\) as a holomorphic bundle where \(L\) is a holomorphic line bundle of degree \(\mc{A}\). In that case, the first cohomology of \(V\otimes K_\Sigma^{1/2} = (L\otimes K_\Sigma^{1/2})^{\oplus N}\) vanishes whenever that of \(L \otimes K_\Sigma^{1/2}\) vanishes. By positivity, this happens for \(\mc{A} > g- 1\). It seems likely that in the general case, there will be similar vanishing results for \(N\mc{A} \gg Ng\).

Suppose we are in this regime. Then we have a nonrelativistic quantum mechanics on the odd vector space
\[
\Pi \mathbb{C}^{N\mc{A}} \text{,}
\]
where \(\Pi\) denotes parity reversal.
The geometric quantisation prescription tells us that quantum states should be holomorphic functions on this space. Such functions are analytic functions of the holomorphic coordinates \((z_1, \cdots, z_{N\mc{A}})\). Because the \(z_i\) are anticommuting, such a function is the linear combination of monomials of the form
\[
z_{i_1} \cdots z_{i_k}
\]
where \(0 \leq k \leq N\mc{A}\). The total Hilbert space is therefore the exterior algebra
\[
\mc{H}^\bullet_\text{Fermi} = \Lambda^\bullet \mathbb{C}^{N\mc{A}}
\]
with total dimension \(2^{N\mc{A}}\).

There is an overall \(U(1)\) action on the space of zero modes, acting by \(z_i \mapsto e^{\ii \theta} z_i\) for each \(i\). This grades the Hilbert space by particle number. The \(k\)-particle Hilbert space is
\[
\mc{H}^k_\text{Fermi}  = \Lambda^k\mathbb{C}^{N\mc{A}} \text{,}
\]
which has dimension \({N\mc{A} \choose k}\). This is the expected dimension of the \(k\)-vortex Hilbert space.

This is the basic result underpinning the duality \eqref{eq:duality1} which we gave at the start of this paper:
\[
Z_{\text{SCS},N} \,\, \leftrightarrow \,\, Z_{\text{Fermi},N} \text{.}
\]
We now understand the theories on both sides a little better. When \(\mc{A} \in \mathbb{Z}\), so that the Fermi theory becomes the theory of \(N\) fermions in a magnetic field, this exactly realises the low energy vortex theory as the theory of the lowest Landau level for a system of \(N\) Fermi flavours in a magnetic field of flux \(\mc{A}\).
Really though, for general \(\mc{A}\), this should be interpreted as a Landau level of charge \(1/N\) objects (or, more accurately, a nonAbelian Landau level).

There is a subtlety in the matching of the global symmetries in the Fermi-vortex duality above. In the Fermi theory, the global symmetry is 
\[
U(N) = \frac{U(1) \times SU(N)}{\mathbb{Z}_N} \text{.}
\]
The \(U(1)\) factor allows us to count the number of particles, and the \(SU(N)\) factor rotates the flavours among themselves. The discrete quotient allows us to couple to background gauge fields for the flavour symmetry which would not be viable as \(U(1) \times SU(N)\) gauge fields. This is the reason that the `effective Abelian flux' \(\mc{A}\) need not be an integer, as long as \(N\mc{A}\) is.

Na\"ively, the vortex theory has a topological global symmetry
\[
U(1)_\text{top}
\]
which allows us to count the number of vortices, and a flavour symmetry
\[
SU(N)_\text{flavour}.
\]
In the Higgs phase, the flavour symmetry is locked to the colour group as
\[
SU(N)_\text{diag} \subset U(N)_\text{colour}\times SU(N)_\text{flavour} \text{.}
\]
which rotates the possible vortex charges among themselves at the same time as the scalar field flavours.

How do these global symmetries fit together? One way to understand this is to start in the vortex theory and couple the scalar fields to a background gauge field for a \(U(N)\) flavour symmetry. This may look illegal, because the overall \(U(1)\) is gauged, but there is nothing stopping one from thinking of the scalar field as taking values in
\[
E \otimes V \to \Sigma
\]
where \(E\) is the colour bundle on which there is a dynamical gauge field and \(V\) is a flavour bundle which could have nonzero degree, \(d\). This nonzero degree shifts the overall Chern class of the bundle \(E\otimes V\) by \(N d\) and so is equivalent to increasing the degree of the colour bundle by \(d\). This means shifting the vortex number \(k \mapsto k + d\). This slightly roundabout argument implies that the global symmetries fit together as
\[
\frac{U(1)_\text{top} \times SU(N)_\text{flavour}}{\mathbb{Z}_N} = U(N) \text{,}
\]
so that the global symmetries match across the duality.

The equality of indices that we have demonstrated (as well as the matching of global symmetries and various parameters) is good evidence for the duality, but does not constitute a proof of the duality. To prove the duality one needs to show more.

In the Abelian case \(N=1\), Eriksson--Rom\~ao showed in \cite{erikssonKQ} that the \(k\)-vortex Hilbert space was
\[
\mc{H}^k_{N=1} \cong \Lambda^k H^0(\Sigma, L \otimes K_\Sigma^{1/2})
\]
where \(L\) is a holomorphic line bundle of degree \(\mc{A}\) as above. This result holds for all \(\mc{A} \in \mathbb{Z}\) and does not rely on vanishing theorems (in particular, the dimension of the Hilbert space may be larger than the expected dimension for small \(\mc{A}\)). Note that the vortex Hilbert space is defined with respect to a line bundle on \(\Sigma\) (corresponding to a choice of normalisation for a Poincar\'e bundle which allows for the natural construction of a quantum line bundle on the vortex moduli space) which must be taken to be \(L \otimes K_\Sigma^{1/2}\) for the above isomorphism to be natural and even to exist in general.

This leads us to conjecture in the nonAbelian case that the \(k\)-vortex Hilbert space is
\begin{equation}
\label{eq:localdualityconj}
\mc{H}^k_{N} \stackrel{?}{\cong} \Lambda^k H^0(\Sigma, V \otimes K_\Sigma^{1/2})
\end{equation}
where \(V\) is a (semi)stable holomorphic bundle of rank \(N\) and degree \(N\mc{A}\) as above, and the vortex Hilbert space is defined with respect to the bundle \(V \otimes K_\Sigma^{1/2}\). Proving this conjecture would essentially constitute a proof of the Fermi-vortex duality as a genuine duality of non-supersymmetric field theories.

The conjecture would also imply that the true dimension of the Hilbert space obeys the inequality
\[
\text{dim} (\mc{H}_N^k) \geq {N\mc{A} \choose k}\text{,}
\]
which follows because 
\begin{align*}
h^0(\Sigma , V \otimes K_\Sigma^{1/2}) &= N\mc{A} + h^1(\Sigma, V \otimes K_\Sigma^{1/2}) \\
	&\geq N\mc{A}\text{.}
\end{align*}

%%%%%%%%%%%%%%%%%%%%%%%%%%%%%%%%%%%%%%%%%%%%%%%%%%%%%%%%%%%%%%%%%%%%%%%%%%%%%%%%%%%%%%%%%%%%%%%%%%%%%%%%%%%%%%%%%
%%%%%%%%%%%%%%%%%%%%%%%%%%%%%%%%%%%%%%%%%%%%%%%%%%%%%%%%%%%%%%%%%%%%%%%%%%%%%%%%%%%%%%%%%%%%%%%%%%%%%%%%%%%%%%%%%
\section{Semilocal duality}
\label{sec:slduality}

%%%%%%%%%%%%%%%%%%%%%%%%%%%%%%%%%%%%%%%%%%%%%%%%%%%%%%%%%%%%%%%%%%%%%%%%%%%%%%%%%%%%%%%%%%%%%%%%%%%%%%%%%%%%%%%%%
\subsection{What are semilocal vortices?}

The outcome of the vortex count at \(N_c = 1\) is (see \autoref{theorem:semilocalab})
\[
\chi(\mc{L}_{k;1,N_f}^\lambda ) = \sum_{j=0}^g \lambda^j N_f^{g-j} {g \choose j} {\lambda (\mc{A} - k) + \delta -g \choose \delta-j} 
\]
where \(\delta = N_f k + (N_f-1)(1-g)\).

The simplest case is \(\lambda = N_f\), in which case the count is
\begin{equation}
\label{eq:slcount}	
N_f^g{N_f \mc{A} + (N_f -1)(1-g) \choose N_f k + (N_f-1)(1-g) } = N_f^g {N_f \mc{A} + (N_f-1)(1-g) \choose N_f(\mc{A} - k)} \text{.}
\end{equation}
The first thing to note here is the asymmetry between vortices and vortex-holes for \(g\neq 1\): it appears that vortex-holes are more natural from the perspective of state counting, reflecting the strange-looking selection rule \(N_fk \geq (N_f-1)(g-1) \) for semilocal Abelian vortices (this is the same as the condition that the expected dimension of the vortex moduli space be nonnegative). The second thing to note is that the count of vortex-holes is reminiscent of the counting of states consisting of \(N_f\) anticommuting objects (that is, fermions). The third thing to note is the appearance of the factor \(N_f^g\), which is familiar from state counting in Abelian Chern--Simons and in theories of anyons.

In this case, we conjecture that the vortex theory is dual to a \(U(1)^{N_f}/U(1)\) theory coupled to \(N_f\) fermion flavours, each attached to flux and spin. We give a description of this theory.

More generally, away from \(\lambda = N_f\), topological effects become more important, exemplified by the fact that the index generally comes as a sum over \(0,\cdots,g\). We do not give a general field-theoretic description in this case, but conjecture that the vortex theory is dual to a theory of anyons with fractional statistics \(\exp(\ii\pi \lambda /N_f)\).

%%%%%%%%%%%%%%%%%%%%%%%%%%%%%%%%%%%%%%%%%%%%%%%%%%%%%%%%%%%%%%%%%%%%%%%%%%%%%%%%%%%%%%%%%%%%%%%%%%%%%%%%%%%%%%%%%
\subsection{Global symmetries}

To understand what a potential dual theory looks like, we consider the symmetries of the semilocal theory. There is the flavour symmetry
\[
SU(N_f)
\]
which is broken to \(S(U(N_c)_\text{diag} \times U(N_f-N_c))\) in the Higgs phase, and the vortex-counting topological symmetry
\[
U(1)_\text{top} \text{.}
\]

To see how the \(U(1)\) and \(SU(N)\) symmetries fit together, we again think of the scalar fields as taking values in a bundle
\[
E \otimes V \to \Sigma
\]
where \(E\) is the colour bundle and \(V\) is the flavour bundle. Shifting the degree of the flavour bundle by \(d\) is formally equivalent to a shift of \(E\) by \(\frac{N_f}{N_c} d\). 

In general, the fractional nature of the shift means that there is something to think about here\footnote{In a sense, the fact that the shift simplifies for \(N_f = N_c\) or for \(N_c = 1\) is the reason why these two cases are simple enough for us to study in detail.}, but in the Abelian case this means that the topological symmetry binds to the original flavour symmetry as
\begin{equation}
\label{eq:semilocalsym}
U(1)_\text{top} \times SU(N_f)\text{,}
\end{equation}
which is the global symmetry we should aim for.

%%%%%%%%%%%%%%%%%%%%%%%%%%%%%%%%%%%%%%%%%%%%%%%%%%%%%%%%%%%%%%%%%%%%%%%%%%%%%%%%%%%%%%%%%%%%%%%%%%%%%%%%%%%%%%%%%
\subsection{Flux attachment, spin attachment, and Chern--Simons theory}

Consider the theory of a spin \(1/2\) fermion \(\psi\) coupled to a \(U(1)\) gauge potential \(a\) at Chern--Simons level \(n\), with Lagrangian
\[
\frac{n}{2\pi}\, * \, a \wedge \dd a - \frac{1}{2} \left ( \psi^\dagger ( \ii D_t + \Delta_{\db_{a\otimes B}} ) \psi + \text{c.c.} \right) \text{,}
\]
where \(B\) is a background Abelian connection, including the spin connection (often, one incorporates \(B\) into the dyanamical connection \(a\) by including a BF coupling of the form \(a \wedge \dd B\)). 

The equation of motion for \(a_t\) is the Gauss law constraint
\begin{equation}
\label{eq:fluxattach}
\frac{n}{2\pi} F(a)_\Sigma =  |\psi |^2 \omega_\Sigma\text{.}
\end{equation}
The object on the right is the \(\psi\) particle density on \(\Sigma\). This equation requires the particle number, which is the integral of this particle density, to be quantised in units of \(n\). It also tells us that for a given configuration of \(\psi\) particles, the moduli space of classical gauge fields on \(\Sigma\) is the Jacobian of \(\Sigma\) with symplectic form scaled by \(n\). It is well-known that quantising this gives \(n^g\) states associated to the gauge field \(a\).

On the other hand, the low-temperature equation of motion for \(\psi\) on a compact surface is the lowest Landau level equation
\[
\db_{a\otimes B} \psi = 0 \text{.}
\]
The Riemann--Roch theorem tells us that, if the background magnetic flux is sufficiently strong, the number of possible states for the \(\psi\) particle is
\[
\mc{B} + m
\]
where \(\mc{B}\) is the integrated magnetic flux, and \(m\) is the number of \(\psi\) particles divided by \(n\). The genus-dependent terms cancel because \(\psi\) has spin \(1/2\).

Because \(\psi\) is a fermion, the occupation number of each state is one or zero. This means that, for a given compatible configuration of the gauge field, the number of \(\psi\) states is
\[
{\mc{B} + m \choose nm}\text{.}
\]
This does not take into account the number of states associated to the gauge field, which to a first approximation, gives an additional factor of \(n^g\). In general, a more detailed calculation is necessary to account for potential effects associated to nontrivial topology of the moduli space. 

Proceeding for now, if we set \(n=1\) then \(n^g = 1\) and the (approximate) number of quantum states in the full theory is
\begin{equation}
\label{eq:boson1}
{\mc{B} + m \choose m}\text{.}
\end{equation}
This is the count of states in a Landau level of \emph{bosons}.

This carries the essence of \emph{flux attachment} \cite{wilczekQM}. The equation \eqref{eq:fluxattach} `attaches' \(1/n\) units of flux to each \(\psi\) particle. Each \(\psi\) particle is also electrically charged and the flux attachment turns it into a flux-charge composite. By considering the Aharanov--Bohm phase associated to a rotation of this composite, one can see that the statistical exchange phase of \(\psi\) is shifted by \(\exp(\ii \pi n^{-1})\). As we started with a fermion, the new statistical phase is
\[
\exp( \ii \pi ( 1/n  - 1) ) \text{.}
\]
When \(n =1\), the composite is a boson, which agrees with what we saw with the state count in the lowest Landau level.

There is a subtlety here. The state count \eqref{eq:boson1} is \emph{not} the count of states for spin 0 bosons in the lowest Landau level in general, but rather the count of states for spin \(1/2\) bosons. Indeed, the number of states for a system of \(q\) spin 0 bosons in the lowest Landau level instead takes the form
\[
{\mc{B} + q + (1-g) \choose q} \text{.}
\]
Of course, one can absorb \(1-g\) into \(\mc{B}\), shifting the total effective magnetic flux. This is a little unsatisfactory from a geometrical perspective. It is a little more natural to think of flux attachment as changing the statistics, but not the geometrical spin. 

If we want to really change the spin of the fermion, we should introduce a BF coupling between the dynamical connection \(a\) and the background spin connection \(\Gamma\) on \(K_\Sigma^{1/2}\) so as to `eat' the spin degrees of freedom. This term takes the general form
\[
  \frac{p}{2\pi} \, * \, a \wedge \dd \Gamma \text{,}
\]
for some level \(p\). Note that \(\Gamma\) is not dynamical: it is a background connection. The new flux attachment equation is
\[
n F(a)_\Sigma + p F(\Gamma)_\Sigma = 2\pi |\psi|^2 \omega_\Sigma \text{.}
\]
Integrating this, we see that the allowed number of \(\psi\) particles now takes the form
\[
q = nm - p(1-g)
\]
for \(m \in \mathbb{Z}\). We have used that \(\Gamma\) is a connection on \(K_\Sigma^{1/2}\) and so has degree \(g-1\).

Setting \(n = p = 1\), the count of states with \(q = m - 1 + g\) of the fermionic \(\psi\) particles is then
\[
{\mc{B} + q +(1-g) \choose q}
\]
which is the count of states for \(q\) spinless bosons, as desired. 

%%%%%%%%%%%%%%%%%%%%%%%%%%%%%%%%%%%%%%%%%%%%%%%%%%%%%%%%%%%%%%%%%%%%%%%%%%%%%%%%%%%%%%%%%%%%%%%%%%%%%%%%%%%%%%%%%
\subsection{The dual of a semilocal vortex}

We now turn these ideas towards the Abelian semilocal vortex theory. For simplicity, consider the case of \(\lambda = N_f\) so that the vortex count is
\[
N_f^g { N_f \mc{A} + (N_f -1)(1-g) \choose N_f (\mc{A} - k)}\text{.}
\]
When \(N_f = 1\), this is simply a theory of fermions in a magnetic field. When \(N_f > 1\), we must do something a bit more clever. 

Although, for consistency, we will eventually start with a theory of fermions, it is actually easier to write down a bosonic dual in the general case. Take a theory of \(N_f\) spinless bosons \(\Phi = (\Phi_1, \cdots, \Phi_{N_f})\) coupled to a background \(U(1) \times SU(N_f)\) connection \(b\) (this differs from our usual \(B\) as it does not contain the spin connection, because our bosons are spinless) and a dynamical \(U(1)\) gauge potential \(a\). We introduce the Chern--Simons coupling
\[
\frac{N_f}{2\pi} \, * \, a \wedge \dd a - \frac{1}{2\pi} \, * \, a \wedge \dd \Gamma
\]
where \(\Gamma\) is the non-dynamical spin connection on \(K_\Sigma^{1/2}\), and the usual matter terms
\[
 \frac{1}{2} \left( \Phi^\dagger ( \ii D_t + \Delta_{\db_{a\otimes b} }) \Phi + \text{c.c.} \right)
\]
into the Lagrangian. 

The Gauss law is 
\[
- N_f F(a)_\Sigma + F(\Gamma)_\Sigma = 2\pi | \Phi|^2 \omega_\Sigma \text{,}
\]
which restricts the number of \(\Phi\) particles to take the form
\[
q = - N_f m + (1-g)\text{,}
\]
for \(m \in \mathbb{Z}\). 

At low temperatures, the equation of motion for \(\Phi\) is
\[
\db_{a\otimes b} \Phi = 0
\]
which has a space of solutions of expected dimension
\[
N_f \mc{B} + N_f m + N_f(1-g) = N_f \mc{B} - q + (N_f-1)(1-g)
\]
where \(\mc{B}\) is the total flux associated to the background magnetic field \(b\). Because \(\Phi\) is bosonic, the counting of \(q\) particle states is, assuming that the count of states factors simply into gauge and particle counts,
\[
N_f^g{N_f \mc{B} + (N_f-1)(1-g) \choose -N_f m + (1-g)}\text{.}
\]
If we identify the vortex number \(k\) with \(-m + 1 -g\) and \(\mc{A}\) with \(\mc{B}\), this is the index for semilocal vortices at \(\lambda = N_f\). This is a version of bosonic particle-vortex duality.

It would be nice to give this a fermionic description. To do this, we should realise each of the \(\Phi_i\) as a fermion attached to a unit of flux and spin. 

To do this, we start with \(N_f\) fermion flavours \(\psi_1, \cdots, \psi_{N_f}\). We gauge the \(U(1)\) phase of each fermion flavour individually, writing \(\alpha_i\) for the gauge field associated to rotations of \(\psi_i\). We introduce an extra dynamical gauge field \(a\), which is associated to simultaneous rotations of all of the \(\psi_i\). This might look a bit strange, as we have already gauged that \(U(1)\), but one can think of it as a shift of each of the \(\alpha_i\). This means that the overall gauge group is \(U(1)^{N_f} / U(1)\).

We consider the Chern--Simons terms
\[
\frac{N_f}{2\pi} a \wedge \dd a - \frac{1}{2\pi } a \wedge \dd \Gamma - \sum_{i=1}^{N_f} \left( \frac{1}{2\pi} \alpha_i \wedge \dd \alpha_i + \frac{1}{2\pi} \alpha_i \wedge \dd \Gamma \right) \text{,}
\]
in addition to the usual matter terms for \(\psi\).

The Gauss laws for the various gauge fields impose the equations
\begin{align*}
F(\alpha_i) + F(\Gamma) &= 2\pi \psi_i^\dagger \psi_i \text{ (no summation)}\\
-N_f F(a) + F(\Gamma) &= 2\pi \sum_i \psi_i^\dagger \psi_i \text{.}
\end{align*}
These equations imply that
\[
\sum_i F(\alpha_i) = -N_f F(a) + (1-N_f) F(\Gamma) \text{.} 
\]

The total number of particles is constrained to take the form
\[
q = -N_f m + (1-g)
\]
for an integer \(m\). Incorporating the \(N_f^g\) states associated to the gauge field, which we again assumes enters in the simplest way, the total number of states for a given \(q\) is then
\[
N_f^g {N_f \mc{B} + (N_f-1)(1-g) \choose - N_f m + (1-g) } \text{.}
\]
This is again the vortex count.

This leads us to conjecture a low-temperature duality of nonrelativistic \((2+1)\)-dimensional theories of the schematic form
\[
U(1)_{N_f} + N_f \text{ fundamental scalars} \,\, \leftrightarrow \,\, U(1)^{N_f}/U(1) + N_f \text{ fermions } + \text{ BF couplings} \text{,} 
\]
taking vortices to fermions. 
This is highly reminiscent of the mirror symmetry duality
\[
U(1) + N_f \text{ hypermultiplets} \,\, \leftrightarrow \,\, U(1)^{N_f}/U(1) + N_f \text{ hypermultpliets}
\]
of three-dimensional gauge theories with \(\mc{N}=4\) supersymmetry \cite{intriligatorMS}. The appearance of a Chern--Simons term in the vortex theory can be thought of from this perspective as arising from integrating out the fermionic fields of the supersymmetric theory.

It would be interesting to understand more precisely the nature of three-dimensional mirror symmetry in this `quantum Hall regime' (a regime considered for three-dimensional Chern--Simons theories with \(\mc{N}=2\) supersymmetry in \cite{leblancES}).

%%%%%%%%%%%%%%%%%%%%%%%%%%%%%%%%%%%%%%%%%%%%%%%%%%%%%%%%%%%%%%%%%%%%%%%%%%%%%%%%%%%%%%%%%%%%%%%%%%%%%%%%%%%%%%%%%
%%%%%%%%%%%%%%%%%%%%%%%%%%%%%%%%%%%%%%%%%%%%%%%%%%%%%%%%%%%%%%%%%%%%%%%%%%%%%%%%%%%%%%%%%%%%%%%%%%%%%%%%%%%%%%%%%
\section{Conclusions}
\label{sec:conclusion}

By computing a fairly `soft' topological invariant, we have understood aspects of the behaviour of quantum vortices in nonrelativistic gauge theories. We have found that nonAbelian vortices with \(N_c = N_f = \lambda\) behave as fermions in a nonAbelian Landau level. More generally, we have found evidence that quantum vortices may be regarded as composite objects, built of locally-defined anyonic objects. 

Our results are consistent with duality: vortices in these theories are dual to the composite particles which constitute the correct low-temperature degrees of freedom in certain nonrelativistic electron fluids. It is plausible therefore that the general result \eqref{eq:generalvortexcount} contains information about rather general quantum Hall fluids. 

%how to choose the right CS theory

%mirror symmetry
Our results, particularly at the special point \(\lambda = N_f\), live in the shadow of three-dimensional mirror symmetry (originally due to \cite{intriligatorMS}). As we alluded to above, it seems plausible that the duality we have discussed could be derived by a suitable deformation of conventional mirror dualities. Conversely, if such a deformation was understood, our results could be viewed as evidence for three-dimensional mirror symmetry. The use of exact results in topological quantum mechanics to probe and provide evidence for mirror symmetry has been of interest recently \cite{bullimoreTI,bullimoreTHS,crewFTI}.

%bosonisation
As we commented briefly for the Abelian case, it is plausible that the vortex-hole (a)symmetry of the vortex quantum mechanics can be related to a nonrelativistic version of the bosonisation duality of \cite{jensenMB}. It would be interesting to develop this in detail, particularly for the nonAbelian case. As for the relation to mirror symmetry, this would require a good understanding of the deformation of three-dimensional gauge theories into the `quantum Hall regime'.

%volumes
We have seen that the expression for the index is simplest when the Chern--Simons level \(\lambda\) is equal to the number of flavours \(N_f\). The general result \eqref{eq:generalvortexcount} contains more though. In particular, in the semiclassical limit \(\lambda \to \infty\), the index becomes the volume of the vortex moduli space (this can be seen by considering the large \(\lambda\) limit of \eqref{eq:HRR}). The volumes of the vortex moduli spaces describe the statistical mechanics of vortex gases \cite{mantonSMV} and so represent interesting quantities. They have been computed in \cite{mantonVVM,miyakeVMS,ohtaHCB,etoSMV}, although there remains some confusion in the case of nonAbelian vortices on surfaces of genus greater than zero. The results of this paper may be used to understand this issue.

%magnetic Skyrmions
Our results could also be turned in a rather different direction. In \cite{waltonGMS}, following ideas of \cite{bartonMS}, it was argued out that the geometric quantisation of the moduli space of vortices in a \(U(1)\) gauge theory with two fundamental flavours in the infrared limit of \(\mc{A} \to \infty\) describes the low energy quantum dynamics of magnetic Skyrmions, which are (approximately stable) solitons in ferromagnetic materials. The extent to which this is true is not clear, because it is a statement that is based on the micromagnetic approximation, which may not be valid at quantum scales. If it is a useful idea, the limit \(\mc{A} \to \infty\) introduces an infinite infrared degeneracy which can be regularised by putting the magnetic Skyrmions in a harmonic trap, as discussed in \autoref{subsec:harmonictrap}.

%exotic vortices
A special feature of \emph{nonrelativistic} gauge theories in \(2+1\) dimensions is that one can change the sign of the Higgs potential coupling without breaking the theory. This allows one to consider regimes where so-called \emph{exotic vortices}, such as Jackiw--Pi or Ambj{\o}rn--Olesen vortices, may exist \cite{jackiwSSS, ambjornAS, mantonFVE,turnerQO}. Some applications of the results of this paper to  exotic vortices will be given in forthcoming work.

%%%%%%%%%%%%%%%%%%%%%%%%%%%%%%%%%%%%%%%%%%%%%%%%%%%%%%%%%%%%%%%%%%%%%%%%%%%%%%%%%%%%%%%%%%%%%%%%%%%%%%%%%%%%%%%%%
%%%%%%%%%%%%%%%%%%%%%%%%%%%%%%%%%%%%%%%%%%%%%%%%%%%%%%%%%%%%%%%%%%%%%%%%%%%%%%%%%%%%%%%%%%%%%%%%%%%%%%%%%%%%%%%%%
\section*{Acknowledgements}

I am grateful to Nick Manton, Nick Dorey, and Bernd Schroers for helpful comments. This work has been partially supported by an EPSRC studentship and by STFC consolidated grants ST/P000681/1, ST/T000694/1.

%%%%%%%%%%%%%%%%%%%%%%%%%%%%%%%%%%%%%%%%%%%%%%%%%%%%%%%%%%%%%%%%%%%%%%%%%%%%%%%%%%%%%%%%%%%%%%%%%%%%%%%%%%%%%%%%%
%%%%%%%%%%%%%%%%%%%%%%%%%%%%%%%%%%%%%%%%%%%%%%%%%%%%%%%%%%%%%%%%%%%%%%%%%%%%%%%%%%%%%%%%%%%%%%%%%%%%%%%%%%%%%%%%%
\bibliographystyle{JHEP}
\bibliography{bibliography.bib}

\end{document}